\DeclareMathOperator{\compose}{{{\normalfont\text{\textbullet}}}}
\newcommand{\defeq}{\mathrel{=_{\text{def}}}}
\begin{document}
\title{Once and for all: how to compose modules --\newline The composition calculus}
\titlerunning{The composition calculus}
%
%

\author{Peter Fettke\inst{1,2}\orcidID{0000-0002-0624-4431} \and
Wolfgang Reisig\inst{3}\orcidID{0000-0002-7026-2810}}
\authorrunning{P. Fettke, W. Reisig}
%

\institute{German Research Center for Artificial Intelligence (DFKI), Saarbr\"ucken, Germany \\
\email{peter.fettke@dfki.de}\\ \and
Saarland University, Saarbr\"ucken, Germany \\ \and
Humboldt-Universität zu Berlin, Berlin, Germany \\ 
\email{reisig@informatik.hu-berlin.de}}

\maketitle 
\begin{abstract}
Computability theory is traditionally conceived as the \emph{the} theoretical basis of informatics. Nevertheless, numerous proposals transcend computability theory, in particular by emphasizing interaction of modules, or components, parts, constituents, as a fundamental computing feature. In a technical framework, interaction requires \emph{composition} of modules. Hence, a most abstract, comprehensive theory of modules and their composition is required. To this end, we suggest a minimal set of postulates to characterize systems in the digital world that consist of interacting modules. For such systems, we suggest a calculus with a simple, yet most general composition operator which exhibits important properties, in particular associativity. We claim that this composition calculus provides not just another conceptual, formal framework, but that essentially all settings of modules and their composition can be based on this calculus. This claim is supported by a rich body of theorems, properties, special classes of modules, and case studies.

\keywords{systems composition \and data modeling \and behavior modeling \and composition calculus \and algebraic specification \and Petri nets}
\end{abstract}

\section{Introduction}\label{sec:1}

\subsection{The problem}\label{sec:1:1}

Scientific areas come frequently with specific foundational theories. For example, classical engineering areas employ \emph{the} calculus, covering the real numbers, continuous functions, differential equations, et cetera. For informatics, right from the start, the theory of \emph{computability theory} has been accepted as its theoretical basis. This choice has been motivated and justified by the observation that the capability of a software program could plausibly be abstracted as a computable function over the words of an alphabet, or over the natural numbers. Several attempts to declare more functions as effectively algorithmically computable, in an intuitive sense, were doomed to fail.

Meanwhile, informatics penetrates many aspects of everyday life. Computers became components of a new, digital world. This world includes systems consisting of decentralized modules, including mechanical machinery, organizations, and persons in distinguished roles. The theory of computable functions no longer suffices as a basis to understand this world. This rises the quest for a new theoretical foundation of informatics. So, a fundamentally new question arises: what aspects are common to all systems in the digital world, and how can they be abstracted into a framework that provides a formal basis for each system in this world?

Literature shows a general understanding that a system is composed of sub-systems, sometimes called \emph{parts}, \emph{components}, \emph{constituents}, \emph{modules}, or similarly. Such modules may be extremely heterogeneous; their interior structures or behavior share merely no common aspect at all. Nevertheless, each module may belong to a bigger system, i.e., be composed with other modules. Hence, \emph{composition} of modules is a fundamental and universal operation on systems in the digital world. 

So far, there is no unified understanding, let alone a framework or a theory, for all the various specific composition and refinement operations, as available in literature. A naïve approach would define a module as a graph structure with an interface, where the interface consists of labeled nodes. Composition of two such modules yields then a module again, gained by merging correspondingly labeled interface elements of the modules. This starting point is promising; but it is fundamentally flawed, as we will see later. This flaw may have stipulated the presumption that it is not promising to strive at a general, unifying theory of composition.
 
Against this background, here we present the \emph{composition calculus}, a framework of modules and their composition. It is \emph{universal} in the sense that we claim that each reasonable concept of modules and their composition can be abstracted in this setting. This claim is supported by case studies in various areas. In this contribution we show that it is also supported by a deep theoretical basis, a rich body of theorems, properties, and special cases.

\subsection{The idea of the composition calculus}\label{sec:1:2}

Merely all modeling techniques organize composition of modules according to the same principle: Each module has labeled interface elements. Composition of two modules $M$ and $N$ merges interface elements of $M$ and $N$ with “corresponding” or “complementary” labels. A typical and most prominent example are process algebras, in particular Robin Milner's calculus of communicating systems (CCS).

The composition calculus follows the dichotomy of complementary labeled interface elements in a slightly revised form but with decisively different consequences: there are two kinds of labeled interface elements, “left” and “right” elements. Composition $M \compose N$ of two modules $M$ and $N$ is then defined by merging each right interface element of $M$ with an equally labeled left interface of $N$. This reflects the observation that in the everyday world, many systems have modules with two kinds of interface elements. Typical examples include: an electric extension flex has a socket and a plug; a magnet has a positive and a negative pole; a Lego brick has nubs and cavities; clothing has buttons and buttonholes; a stock market has sellers and buyers; a factory has providers and requestors. Modules of the same kind can be composed along their complementary elements. Composition then represents activities such as to \emph{link electric wires}, to \emph{click magnets together}, to \emph{stick Lego bricks together}, to \emph{fasten clothes}, to \emph{trade shares}, to \emph{transfer items}, to \emph{operate a business}, et cetera.

Upon constructing the composition $M \compose N$, for a right interface element of $M$ there may exist many or no corresponding left interface elements of $N$. These cases are handled in such a way that composition is well defined, i.e. results in a unique module $M \compose N$, and that composition is associative, i.e one can safely write $M_1 \compose M_2 \compose \dots \compose M_n$, without any brackets.

Obviously, a module consists of elements in its interior, and elements in its interface. The most important operation on modules is their composition, resulting again in a module. In this paper, we study properties of module composition and of classes of modules, independent from specific properties or structures of the interior of modules. The only aspect that matters are properties of interfaces. As a technical convenience, the interior of modules is assumed to be graph-like. But this is not essential. Texts, programs et cetera would likewise do. They may be connected to interface elements in any way

This version of composition has been described and applied in many contributions, including \cite{HERAKLIT_website} and \cite{Fettke_Reisig_24} among others. The composition calculus evolves a rich body of properties and special cases, useful in applications. This suggests that the composition calculus is not just one choice, but a fundamental concept, indeed.

\subsection{This contribution}\label{sec:1:3}

This paper concentrates on properties of interfaces and classes of interfaces that are independent of the interior of modules. This is sensible because the definition of composition of modules depends only on the modules’ interfaces.

In Section 2, the notions of interfaces and modules are introduced, as well as the composition of modules. The interface of a composed module $M \compose N$ consists of interface elements of $M$ and of $N$. Its structure only depends on the structure of the interfaces of $M$ and of $N$. Furthermore, the interior of a composed module $M \compose N$ is more or less the union of the interior of $M$ and the interior of $N$, together with merged interface elements of $M$ and $N$.

Section 3 presents a number of properties of module composition in an algebraic framework. As mentioned, we consider properties that depend only on the structure of the interfaces, independent of the interior of the involved modules. Section 4 then studies particular properties of classes of modules. Section 5 relates the composition calculus to modules and composition operations that can be found in literature.

\section{The notion of modules}\label{sec:2}

\subsection{Motivation}\label{sec:2:1}
We strive at a most liberal notion of \emph{modules} and their \emph{composition}. This should include each conceivable specific conception of module and composition, as a special case. To be more specific, we formulate four \emph{postulates}, stating properties that each kind of modules meet:

\begin{enumerate}
\item A module includes any set of “elements” and any kind of relations among those elements. Some elements of a module $M$ belong to the \emph{interface} of $M$, the rest to the \emph{interior} of $M$.

\item Each interface element caries a kind of \emph{label}. Two modules are composed according to a rule about labeled elements of their interfaces. 

\item Composition should be technically simple, but still be capable of expressing any aspect considered relevant when modules are composed. Any two modules $M$ and $N$ can be composed, at least technically; though the composed module $M \compose N$ may not always be useful.

\item The composition of modules $M_1, \dots, M_n$ may be written $M_1 \compose M_2 \compose \dots \compose M_n$, i.e. brackets must not influence composition.
\end{enumerate}

Literature on modules, components, interface languages, et cetera and their composition accompanies composition frequently by additional requirements, depending on the actual states or properties of the involved modules. Composition may include alternative or unspecified behavior, et cetera. All this must be adequately captured by a fitting calculus of composition. 

There are not many frameworks that would meet the above postulates to a satisfactory degree. In this contribution, a framework is systematically built up that meets the above postulates.

\subsection{Interfaces and matches}\label{sec:2:2}
We start with the notion of an \emph{interface}. An interface of a module is a set of labeled and ordered module elements called \emph{gates}. Modules are composed along their interfaces by merging “matching” gates. 

\begin{definition}[interface] Let $\Lambda$ be a set with elements called “labels”. An \emph{interface over $\Lambda$} is a totally ordered finite set, its \emph{carrier set}, where each element, called a \emph{gate}, caries a label from $\Lambda$. The order is denoted as usual by “$<$”.
\end{definition}

In the rest of this contribution, we frequently assume a set $\Lambda$ of labels without further notice.

In graphical representations, the gates of an interface are usually vertically ordered, with the first, smallest element on top. Figure~\ref{fig:01} shows two interfaces, $A$ and $B$, over the label set $\Lambda = \{\alpha, \beta\}$.

Interfaces will be parts of modules and will be used to compose modules. To represent composition, gates of two interfaces are merged in case they are \emph{matching}. Gates of two interfaces \emph{match}, if they coincide in labeling \emph{and} order:

\begin{definition}[match]
Let $A$ and $B$ be two interfaces over $\Lambda$, let $a \in A$ and $b \in B$

\begin{enumerate}

\item The set $\{a, b\}$ is a \emph{match of $A$ and $B$}, if for some label $\lambda$, both $a$ and $b$ are $\lambda$-labeled, and the number of $\lambda$-labeled gates that are smaller than $a$ in $A$ is equal to the number of $\lambda$-labeled gates that are smaller than $b$ in $B$.

\item Let \emph{matches$(A, B)$} be the set of all matches of $A$ and $B$. 

\item A gate of $A$ that does not belong to a match of $A$ and $B$ is \emph{match free with respect to $B$}. Symmetrically, a gate of $B$ that does not belong to a match of $A$ and $B$ is match free with respect to $A$.

\item Let \emph{matchfree$(A, B)$} be the set of all gates of $A$ that do not belong to $a$ match with $B$.

\end{enumerate}
\end{definition}

For example, in Figure~\ref{fig:01}, the interfaces $A$ and $B$ have two matches: $\{a, d\}$ and $\{b, e\}$. Furthermore, matchfree$(A, B) = \{c\}$, and matchfree$(B, A) = \{f\}$.

\begin{figure}[htb]
\centering
\includegraphics[scale=.3]{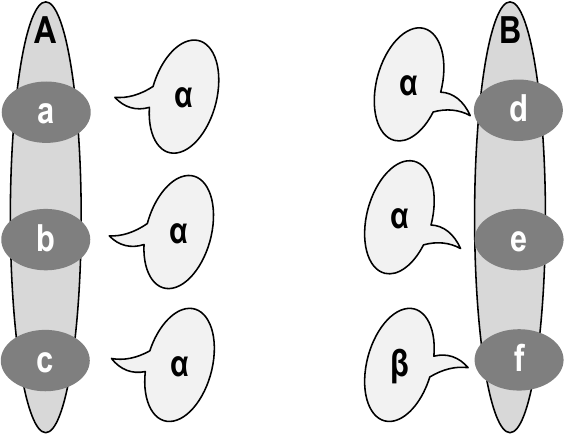}
\caption{Two interfaces, $A$ and $B$}
\label{fig:01}
\end{figure}

\subsection{Graphs with interfaces}\label{sec:2:3}

Based on the postulates in Section~\ref{sec:2:1}, it suggests itself to abstract a module as a \emph{graph}. This leaves the interpretation of nodes and edges open for many different interpretations, e.g., a typical such interpretation are Petri nets. 

(Directed) graphs are defined as usual: 

\begin{definition}[graph]
A \emph{graph} is a tuple $G = (V, E)$ consisting of a set $V$ of vertices and a set $E \subseteq V \times V$ of edges. 
\end{definition}

All forthcoming notions based on graphs apply to directed and undirected graphs alike. Mostly we assume graphs to be directed. But sometimes it is convenient to assume the undirected version. 

Any subset of nodes of a graph can serve as the carrier set of an interface. Two graphs $G$ and $H$ can be composed along an interface of $G$ and an interface of $H$, resulting again in a graph:

\begin{definition}[composition along interfaces] Let $G$ and $H$ be two graphs, let $A \subseteq G$ and $B \subseteq H$ be interfaces of $G$ and $H$. Then the \emph{composition of $G$ and $H$ along $A$ and $B$} is the graph $K$ where:
\begin{enumerate}

\item The nodes of $K$ are $(G \setminus A) \cup {(H \setminus B)} \; {\cup}$ matches${(A, B)} \; \cup$ matchfree${(A, B)} \; \cup$ matchfree$(B, A)$.

\item For each edge (x, y) of $G$ or of $H$,

\begin{itemize}

\item if $x$ and $y$ are both match free, then $(x, y)$ is an edge of $K$;
\item if $x$ is match free and ${y, y'}$ is a match, then $(x, \{y, y'\})$ is an edge of $K$;
\item if $\{x, x'\}$ is a match and $y$ is match free, then $(\{x, x'\}, y)$ is an edge of $K$;
\item if $\{x, x'\}$ and $\{y, y'\}$ are matches, then $\{\{x, x'\}, \{y, y'\}\}$ is an edge of $K$.
\end{itemize}

\end{enumerate}
\end{definition}

It suggests itself to conceive the matches of $G$ and $H$ as interior nodes of $K$ and the match free nodes as interface elements.

Figure~\ref{fig:02} shows two graphs with interfaces, and their composition along those interfaces. In fact, conceiving \emph{components} as graphs, and defining their composition as merging nodes of interfaces, appears an attractive choice, because it is quite general, and technically simple. Unfortunately, this idea has a serious drawback: this composition operator is not associative, i.e. for three graphs $G, H$, and $K$, the graphs $(G \compose H) \compose K$ and $G \compose (H \compose K)$ are in general not identical. Figure~\ref{fig:03} sketches an example; gates are drawn as lines, all with label $\alpha$. To overcome this problem, we suggest a slight variant of graphs with an interface, called \emph{modules}, as defined next.

\begin{figure}[htb]
\centering
\includegraphics[width=1\textwidth]{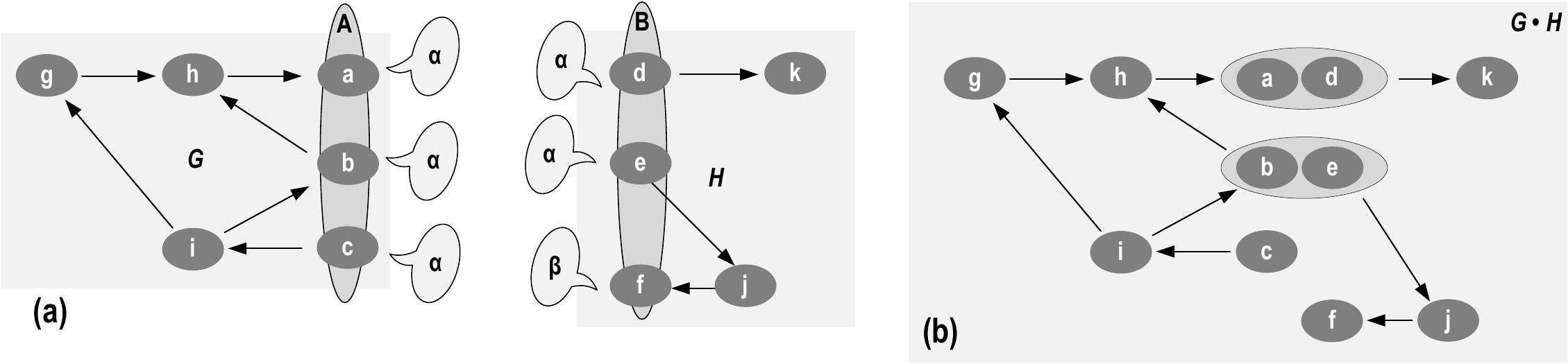}
\caption{Two graphs $G$ and $H$ and their composition $G \compose H$. (a) Graph $G$ with interface $A$ and graph $H$ with interface $B$. (b) Composition of $G$ and $H$ along the interfaces $A$ and $B$.}
\label{fig:02}
\end{figure}

\begin{figure}[htb]
\centering
\includegraphics[scale=.3]{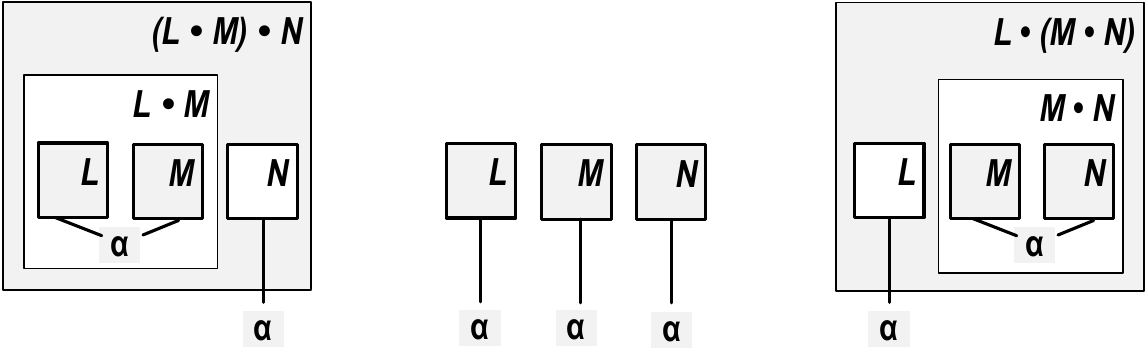}
\caption{$(L \compose M) \compose N$ and $L \compose {(M \compose N)}$ differ}
\label{fig:03}
\end{figure}

\subsection{Modules}\label{sec:2:4}
The decisive new idea here is the conception of a module as a graph with \emph{two} interfaces. Assuming a set $\Lambda$ of labels, as discussed in Section~\ref{sec:1:1}, we define:

\begin{definition}[module] A \emph{module $M$ over $\Lambda$} is a graph, together with two interfaces over $\Lambda$ which are subsets of nodes of $M$. The interfaces, written $^\ast M$ and $M ^\ast$, are the \emph{left} and \emph{right} interface of $M$. The nodes $M \setminus (^\ast M \cup M ^\ast)$ are the \emph{interior} of $M$.
\end{definition}

Graphically, a module is depicted as a box, surrounding the module’s graph. The left and the right interfaces are drawn onto the left and right margin of the box. Figure~\ref{fig:04}(a) shows examples. The above definition does not exclude a gate $g$ of a module to belong both to its left as well as to its right interface:

\begin{definition}[shared gate] For a module $M$, a \emph{gate} $g \in {^\ast M} \cap {M ^\ast}$ is \emph{shared} in $M$.
\end{definition}

In practical contexts, the feature of shared gates is occasionally useful.

\subsection{Composition of modules}\label{sec:2:5}
A fundamental feature of modules is their \emph{composition}. We first consider the case of modules without shared gates. To compose two such modules $M$ and $N$, their underlying graphs are composed along the interfaces $M ^\ast$ and $^\ast N$. In particular, the matches of $M ^\ast$ and $^\ast N$ go to the interior of $M \compose N$. The left interface $^\ast (M \compose N)$ of the composed module $M \compose N$ expands the left interface $^\ast M$ of $M$ by the match free elements of $^\ast N$. Likewise, the right interface $(M \compose N) ^\ast$ of the composed module $M \compose N$ expands the right interface $N ^\ast$ of N by the match free elements of $M ^\ast$.

\begin{definition}[composition of modules without shared gates] Let $M$ and $N$ be two modules over $\Lambda$ both without shared gates. Their \emph{composition} is the module $M \compose N$ over $\Lambda$ where:

\begin{itemize}

\item the graph of $M \compose N$ is the composition of the graphs of $M$ and $N$ along the interfaces $M ^\ast$ and $^\ast N$;

\item the left interface $^\ast (M \compose N)$ of $M \compose N$ is ${^\ast M} \; \cup$ matchfree$(^\ast N, M ^\ast)$. The gates of $^\ast M$ are ordered before the gates of matchfree$(^\ast N, M ^\ast)$;

\item the right interface $(M \compose N) ^\ast$ of $M \compose N$ is $N ^\ast \; \cup$ matchfree$(M ^\ast, ^\ast N)$. The gates of $N ^\ast$ are ordered before the gates of matchfree$(M ^\ast, ^\ast N)$.

\end{itemize}
\end{definition}

Figure~\ref{fig:02}(b) shows an example of composing two modules.

\begin{figure}[htb]
\centering
\includegraphics[width=1\textwidth]{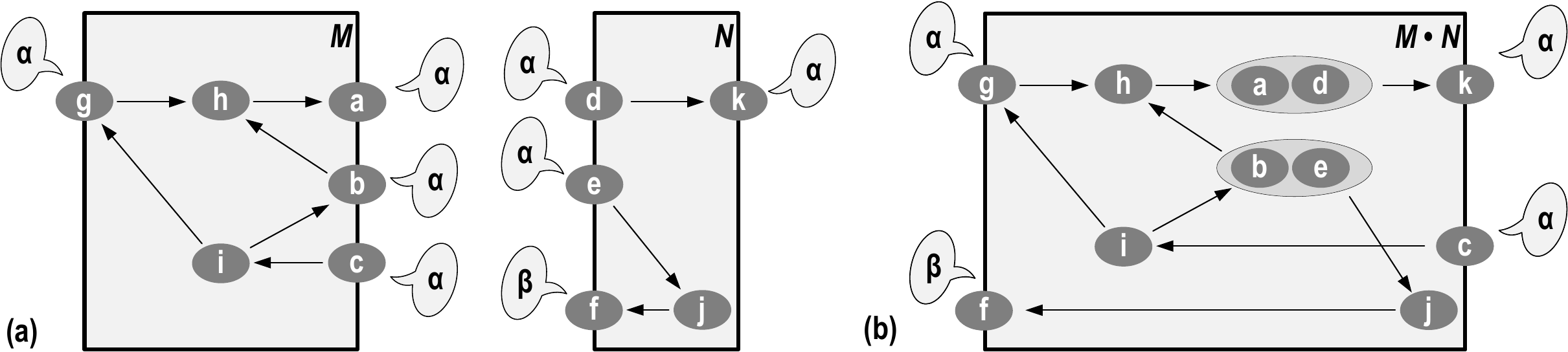}
\caption{(a) Modules $M$ and $N$, (b) their composition $M \compose N$}
\label{fig:04}
\end{figure}

Turning back to shared gates, it is obviously awkward to draw a shared gate $g$ of a module $M$ onto the left as well the right margin of the box of $M$. Instead, it turns out useful to draw $g$ twice, onto the left and the right margin of the box, and to link these drawings by a double line (reminding the equality symbol “=”). Figure~\ref{fig:05}(a) shows examples.

\begin{figure}[htb]
\centering
\includegraphics[width=1\textwidth]{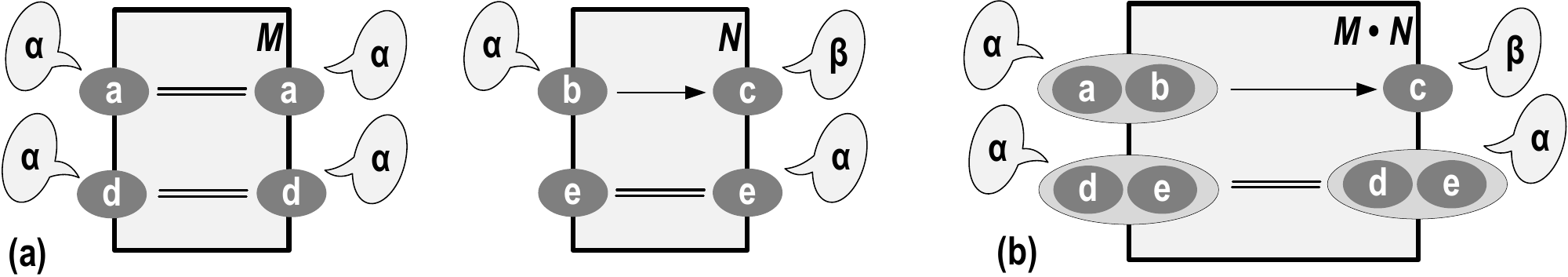}
\caption{Modules $M$ and $N$ with shared gates and their composition $M \compose N$. (a) Gates $a$ and $d$ belong to  $^\ast M$ and $M ^\ast$. Gate $e$ belongs to $^\ast N$ and $N ^\ast$. (b) The match $\{d, e\}$ belongs to $^\ast (M \compose N)$ and to $(M \compose N)^\ast$.}
\label{fig:05}
\end{figure}

Composition of modules with shared gates moves matches to interfaces:

\begin{definition}[composition of modules with shared gates] Let $M$ and $N$ be two modules over $\Lambda$. To define the \emph{composed module} $M \compose N$, the above composition of modules without shared gates is complemented as follows:

\begin{itemize}
\item each $x \in {^\ast M} {\cap} {M ^\ast}$ with a match $\{x,y\}$ of $M ^\ast$ and $^\ast N$ is in $^\ast (M \compose N)$ replaced by $\{x,y\}$;
\item each $y \in {^\ast N} \cap N ^\ast$ with a match $\{x,y\}$ of $M ^\ast$ and $^\ast N$ is in $(M \compose N) ^\ast$ replaced by $\{x, y\}$.
\end{itemize}

\end{definition}

\subsection{Equivalence}\label{sec:2:6}
The order of differently labeled gates is irrelevant for the composition of modules. This observation implies an equivalence on interfaces and on modules:

\begin{definition}[equivalent interfaces] Two interfaces $A$ and $B$ are \emph{equivalent} with $\phi$ if and only if 
$\phi: A \rightarrow B$ is a bijective mapping such that for all elements $a, b \in A$ holds:

\begin{enumerate}
\item the labels of a and of $\phi(a)$ coincide; 
\item if $a$ and $b$ are equally labeled, then $a < b$ in $A$, if and only if $\phi(a) < \phi(b)$ in $B$. 
\end{enumerate}
\end{definition}

As a shorthand, $B$ is written $\phi(A)$. For example, in Figure~\ref{fig:06}, the interfaces $A$ and $\phi(A)$ are equivalent.

\begin{figure}[htb]
\centering
\includegraphics[scale=.3]{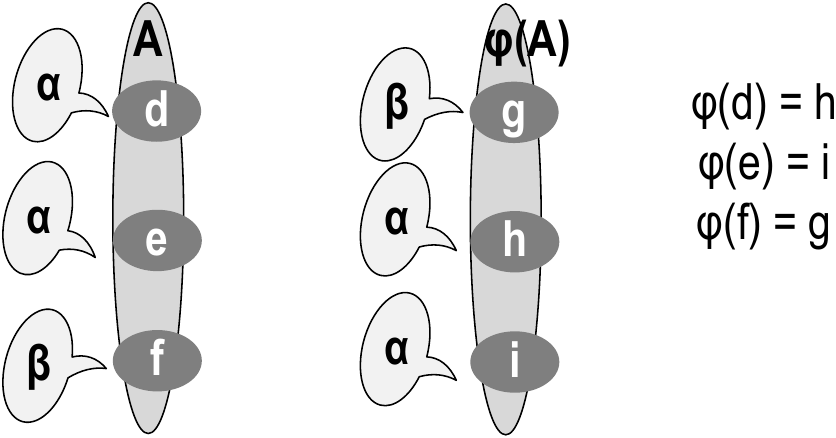}
\caption{Two equivalent interfaces A and $\phi(A)$}
\label{fig:06}
\end{figure}

Generally, it holds:

\begin{lemma}Let two interfaces $A$ and $B$ be equivalent with $\phi: A \rightarrow B$. Then for each interface $C$ holds:

\begin{enumerate}
\item $(a, c)$ is a match of $A$ and $C$, if and only if $(\phi(a), c)$ is a match of $B$ and $C$. 
\item $a \in A$ is match free with respect to $c \in C$, if and only if $\phi(a)$ is match free with respect to $c$. 
\end{enumerate}
\end{lemma}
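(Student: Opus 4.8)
The plan is to reduce both claims to a single combinatorial observation: the bijection $\phi$, restricted to the gates carrying any fixed label $\lambda$, is an order isomorphism, and therefore preserves the \emph{$\lambda$-rank} of a gate, i.e. the number of equally labeled gates strictly below it. Since the defining condition of a match in Definition~2 is phrased entirely in terms of labels and such counts, both parts of the lemma will fall out once rank preservation is secured.

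First I would fix a label $\lambda$ and consider the subsets $A_\lambda \subseteq A$ and $B_\lambda \subseteq B$ of $\lambda$-labeled gates. By condition (1) of equivalence, $\phi$ preserves labels, so it maps $A_\lambda$ bijectively onto $B_\lambda$. By condition (2), $\phi$ is strictly monotone on equally labeled gates; hence $\phi|_{A_\lambda}$ is an order-preserving bijection between the finite totally ordered sets $A_\lambda$ and $B_\lambda$, that is, an order isomorphism. The key step is then to conclude that for every $a \in A_\lambda$,
\[
\#\{a' \in A_\lambda : a' < a\} \;=\; \#\{b' \in B_\lambda : b' < \phi(a)\},
\]
i.e. $a$ and $\phi(a)$ have the same $\lambda$-rank. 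This is where the substance of the argument lies, although it is routine: an order isomorphism of finite chains carries the $k$-th smallest element to the $k$-th smallest element.

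With rank preservation in hand, part (1) is immediate. Given any interface $C$ and a gate $c \in C$, by definition $\{a, c\}$ is a match of $A$ and $C$ exactly when $a$ and $c$ share some label $\lambda$ and the $\lambda$-rank of $a$ in $A$ equals the $\lambda$-rank of $c$ in $C$. Since $\phi$ preserves both the label of $a$ and its $\lambda$-rank, the very same label $\lambda$ and the very same rank equality characterize $\{\phi(a), c\}$ being a match of $B$ and $C$. Hence $\{a, c\}$ is a match of $A$ and $C$ if and only if $\{\phi(a), c\}$ is a match of $B$ and $C$.

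Finally, part (2) follows directly from part (1) by negation and quantification over the gates of $C$: a gate $a \in A$ is match free with respect to $C$ precisely when no $c \in C$ forms a match $\{a, c\}$, and by part (1) this condition holds for $a$ if and only if it holds for $\phi(a)$. I expect no genuine obstacle beyond careful bookkeeping in the rank-preservation step; the one subtlety worth flagging is that the monotonicity of condition (2) must be applied only within a single label class, since $\phi$ need not—and in general does not—respect the order between differently labeled gates.
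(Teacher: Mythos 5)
Your proof is correct. Note that the paper states this lemma without any proof at all, so there is nothing to compare against; your rank-preservation argument --- that $\phi$ restricted to each label class $A_\lambda$ is an order isomorphism onto $B_\lambda$ and hence preserves the count of equally labeled gates below a given gate, which is exactly the quantity appearing in the definition of a match --- is the natural and evidently intended justification, and your observation that monotonicity may only be invoked within a single label class is the one point where care is genuinely needed.
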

 
Equivalence of interfaces canonically transfers to modules:

\begin{definition}[interface equivalent modules] Two modules $M$ and $N$ are \emph{interface equivalent}, if and only if the interfaces $^\ast M$ and $^\ast N$ are equivalent, and $M ^\ast$ and $N ^\ast$ are equivalent.
\end{definition}

\begin{theorem}
Let $M$ and $N$ be interface equivalent modules. Then for each module $L$ holds: $M \compose L$ and $N \compose L$ are interface equivalent, and also $L \compose M$ and $L \compose N$ are interface equivalent.
\end{theorem}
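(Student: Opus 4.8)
The plan is to use the Lemma to transfer all match- and matchfree-information between $M$ and $N$, and then to compare the composed interfaces label by label. From the hypothesis fix interface equivalences $\phi\colon {}^\ast M\to {}^\ast N$ and $\psi\colon M^\ast\to N^\ast$. The assertion splits into four interface equivalences: the left and the right interface of $M\compose L$ against those of $N\compose L$, and the left and the right interface of $L\compose M$ against those of $L\compose N$. By the left--right symmetry of composition it is enough to carry out two of them; I would detail ${}^\ast(M\compose L)\equiv {}^\ast(N\compose L)$ and $(M\compose L)^\ast\equiv (N\compose L)^\ast$, the remaining two being verbatim copies with $\phi$ and $\psi$ interchanged.

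First I would record a convenient reformulation of interface equivalence: two interfaces are equivalent if and only if, for every label $\lambda$, they contain the same number of $\lambda$-labelled gates. An equivalence is nothing but a label-preserving bijection that respects the order among equally labelled gates, and such a bijection exists exactly when the per-label counts agree, obtained by sending the $i$-th $\lambda$-gate to the $i$-th $\lambda$-gate. This reduces each of the four claims to a counting identity and, as explained at the end, lets the order bookkeeping and the shared-gate bookkeeping evaporate.

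Next I would compute these counts. By definition ${}^\ast(M\compose L)$ is ${}^\ast M$ together with matchfree$({}^\ast L, M^\ast)$, and ${}^\ast(N\compose L)$ is ${}^\ast N$ together with matchfree$({}^\ast L, N^\ast)$. For the block stemming from $L$ I apply the first claim of the Lemma to $\psi\colon M^\ast\to N^\ast$ with third interface ${}^\ast L$: for $a\in M^\ast$ and $c\in {}^\ast L$, the set $\{a,c\}$ is a match of $M^\ast$ and ${}^\ast L$ iff $\{\psi(a),c\}$ is a match of $N^\ast$ and ${}^\ast L$; since $\psi$ is bijective, $c$ belongs to some match with $M^\ast$ iff it belongs to some match with $N^\ast$. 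Hence matchfree$({}^\ast L, M^\ast)$ and matchfree$({}^\ast L, N^\ast)$ are the same subset of ${}^\ast L$, so they have identical per-label counts. For the block stemming from the module, ${}^\ast M$ and ${}^\ast N$ have equal per-label counts because $\phi$ is an equivalence. Summing the two contributions yields the desired equality of counts, hence ${}^\ast(M\compose L)\equiv {}^\ast(N\compose L)$. For the right interfaces, $(M\compose L)^\ast$ is $L^\ast$ together with matchfree$(M^\ast, {}^\ast L)$, and here the varying module sits on the other side: the second claim of the Lemma shows that $\psi$ restricts to a label-preserving bijection from matchfree$(M^\ast, {}^\ast L)$ onto matchfree$(N^\ast, {}^\ast L)$, while the $L^\ast$-block is common to both; counts agree again.

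I expect the shared-gate case to be the only genuinely delicate point, and the counting reduction is exactly what disarms it. When $M$, $N$, or $L$ carries shared gates, the composition replaces certain gates $x$ by the match $\{x,y\}$ inside the composed interface. Two observations make this harmless for the counts. First, whether a gate is replaced is governed solely by match/matchfree data together with the intrinsic memberships ${}^\ast M\cap M^\ast$ and ${}^\ast L\cap L^\ast$; the match data is transferred between $M$ and $N$ by the Lemma exactly as above, and membership is untouched by composition. Second, a match $\{x,y\}$ carries the common label of $x$ and $y$, which is the label $x$ already bore, so replacing $x$ by $\{x,y\}$ preserves every per-label count. Therefore none of the counting equalities above is affected, and all four equivalences persist. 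Should one instead insist on producing the bijections $\phi$, $\psi$ explicitly rather than through counts, this is precisely the step demanding care, since the definition of interface equivalence does not force a shared gate of $M$ to correspond under $\phi$ to a shared gate of $N$; routing the argument through per-label counts avoids that mismatch entirely.
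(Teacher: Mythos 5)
Your proof is correct and follows the route the paper clearly intends: the theorem is stated immediately after the Lemma on equivalent interfaces precisely so that match and matchfree data can be transferred via $\phi$ and $\psi$, which is exactly what you do (the paper itself gives no explicit proof). Your reduction of interface equivalence to per-label gate counts is a clean way to dispose of the order bookkeeping and the shared-gate case, and it is consistent with the paper's definitions.
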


In fact, this Theorem shows that the relevant properties of modules are valid up to equivalence.

\subsection{Perfect Matches}\label{sec:2:7}
A pair $(M, N)$ of modules \emph{matches perfectly} if and only if $M$ and $N$ compose “without remainders”; i.e. if each gate of $M ^\ast$ finds a match in $^\ast N$ and, vice versa, each gate in $^\ast N$ finds a match in $M ^\ast$. This is the case if $M ^\ast$ and $^\ast N$ are equivalent: 

\begin{definition}[perfect match] Two modules $M$ and $N$ match perfectly, $(M, N)$ are a \emph{perfect match}, if and only if the interfaces $M^\ast$ and $^\ast N$ are equivalent.
\end{definition}

Composition of perfectly matching modules $M$ and $N$ “swallows” $M ^\ast$ and $^\ast N$ and transfers $^\ast M$ and $N ^\ast$ to the composed module:

\begin{theorem} Let $M$ and $N$ match perfectly. Then
\begin{enumerate}
\item Each gate of $M ^\ast$ and of $^\ast N$ belongs to a match of $M ^\ast$ and $^\ast N$.
\item matchfree$(M ^\ast, {^\ast N}) =$ matchfree$(^\ast N, M ^\ast) = \emptyset$.
\item $^\ast (M \compose N)$ and $^\ast M$ are equivalent; and also $(M \compose N) ^\ast$ and $N ^\ast$ are equivalent. 
\item If $^\ast M$ and $M ^\ast$ as well as $^\ast N$ and $N ^\ast$ share no gates then $^\ast (M \compose N) = {^\ast N}$, and $(M \compose N) ^\ast = N ^\ast$. 
\end{enumerate}
\end{theorem}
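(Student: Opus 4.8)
The plan is to reduce every part to the single defining property of a perfect match, namely that $M^\ast$ and ${}^\ast N$ are equivalent interfaces, witnessed by a bijection $\phi\colon M^\ast \to {}^\ast N$ (Definition of perfect match and of equivalent interfaces). First I would prove Part~1 by a direct count. Fix a gate $a \in M^\ast$ and let $\lambda$ be its label. Condition~(1) of interface equivalence gives that $\phi(a)$ is also $\lambda$-labeled, while condition~(2), together with bijectivity, forces $\phi$ to restrict to an order isomorphism from the $\lambda$-labeled gates of $M^\ast$ onto the $\lambda$-labeled gates of ${}^\ast N$. Consequently the number of $\lambda$-labeled gates smaller than $a$ in $M^\ast$ equals the number of $\lambda$-labeled gates smaller than $\phi(a)$ in ${}^\ast N$, which is exactly the matching condition, so $\{a,\phi(a)\}$ is a match; indeed it is the unique match containing $a$, since the smaller-count pins down the partner. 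As $\phi$ is onto, every gate of ${}^\ast N$ is some $\phi(a)$ and hence also lies in a match, which establishes Part~1. (One could instead invoke the preceding Lemma with $C = {}^\ast N = \phi(M^\ast)$, but the explicit count is shorter.)

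Part~2 is then immediate: matchfree$(M^\ast, {}^\ast N)$ is by definition the set of gates of $M^\ast$ belonging to no match, and Part~1 says this set is empty; the symmetric argument gives matchfree$({}^\ast N, M^\ast) = \emptyset$.

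For Part~3 I would substitute Part~2 into the definition of module composition. The left interface ${}^\ast(M \compose N)$ is defined as ${}^\ast M \,\cup\,$matchfree$({}^\ast N, M^\ast)$, and the appended set vanishes by Part~2, so ${}^\ast(M \compose N)$ arises from ${}^\ast M$ only through the shared-gate replacement rule. I would then exhibit the natural map $\psi\colon {}^\ast M \to {}^\ast(M \compose N)$ acting as the identity on non-shared gates and sending each shared gate $x \in {}^\ast M \cap M^\ast$ to the match $\{x,\phi(x)\}$ that replaces it; this is well defined because a shared gate of ${}^\ast M$ lies in particular in $M^\ast$ and therefore, by Part~1, matches the gate $\phi(x)$ of ${}^\ast N$. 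Verifying that $\psi$ is a label- and order-preserving bijection then yields that ${}^\ast(M \compose N)$ and ${}^\ast M$ are equivalent; the claim that $(M \compose N)^\ast$ and $N^\ast$ are equivalent follows by the mirror-image construction $\chi\colon N^\ast \to (M \compose N)^\ast$.

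Part~4 specializes Part~3: when neither $M$ nor $N$ shares a gate, the replacement rule never fires, so $\psi$ and $\chi$ are identities and the set equalities ${}^\ast(M \compose N) = {}^\ast M$ and $(M \compose N)^\ast = N^\ast$ hold literally (the left-hand equation being the one forced by the composition definition). The main obstacle I anticipate is precisely the shared-gate bookkeeping inside Part~3: one must read off from the composition definition what order position the merged node $\{x,\phi(x)\}$ occupies in ${}^\ast(M \compose N)$ and check that it coincides with the position of $x$ in ${}^\ast M$, so that $\psi$ honours condition~(2) of equivalence (its label is unambiguous, both constituents being $\lambda$-labeled). Once that positional fact is settled, the remaining verifications are a routine unfolding of the definitions.
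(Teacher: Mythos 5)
Your proposal is correct, and in fact the paper states this theorem without supplying any proof at all, so there is nothing to compare it against; your argument is the natural one, reducing everything to the order-isomorphism that the equivalence $\phi\colon M^\ast \to {}^\ast N$ induces on each label class, which immediately yields the match $\{a,\phi(a)\}$ and hence Parts 1 and 2, with Parts 3 and 4 following by substitution into the composition definition (your alternative of invoking the Lemma of Section~\ref{sec:2:6} with $C={}^\ast N$ would work equally well). One small point worth flagging: in Part 4 the paper's first equation reads ${}^\ast(M\compose N)={}^\ast N$, which is evidently a typo for ${}^\ast(M\compose N)={}^\ast M$ --- your derivation silently and correctly proves the latter, since ${}^\ast(M\compose N)={}^\ast M\cup\text{matchfree}({}^\ast N,M^\ast)={}^\ast M$ when no replacement fires.
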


\section{Algebraic properties of module composition}\label{sec:3}
Composition of modules exhibits a bunch of important properties that hold for all kind of modules. The most important, fundamental property is the \emph{associativity} of the composition operator. Furthermore, composition “preserves” the involved modules; i.e., composition is \emph{cancellative}. Other properties hold only in special cases; in particular \emph{commutativity}, $M \compose N = N \compose M$, but also \emph{equidivisibility}. All these properties can be exploited at the design time of composed modules.

\subsection{Associativity}\label{sec:3:1}
Postulate 4 in Section~\ref{sec:2:1} demands that the composition of modules $M_1, \dots, M_n$ may just be written as $M_1 \compose M_2 \compose \dots \compose M_n$ without brackets. This requires \emph{associativity} of the composition operator, viz. the following Theorem:

\begin{theorem}Let $L, M$, and $N$ be three modules. Then $(L \compose M) \compose N = L \compose (M \compose N)$.
\end{theorem}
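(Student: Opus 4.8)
The plan is to prove associativity by establishing that both $(L \compose M) \compose N$ and $L \compose (M \compose N)$ have the same underlying graph, the same left interface, and the same right interface (including the ordering of gates and the treatment of shared gates). The strategy is to compute each side explicitly in terms of the three original interfaces $L^\ast$, $^\ast M$, $M^\ast$, $^\ast N$ and the interior graphs of $L$, $M$, $N$, and then to exhibit a canonical identification of the resulting objects. Because composition merges matching gates into unordered pairs (sets $\{x,y\}$), the central combinatorial fact I need is that matching is, in the appropriate sense, associative: when three interfaces are laid against one another, the way gates get fused does not depend on whether I first fuse $M^\ast$ with $^\ast N$ or first leave them and fuse $L^\ast$ with $^\ast M$.

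First I would reduce the problem to interfaces alone. The interior nodes of all three modules are disjoint and are simply carried along unchanged by both bracketings, so the only subtlety lives in the interface nodes and the edges incident to them. Thus I would begin by computing $^\ast((L \compose M)\compose N)$ and $^\ast(L\compose(M\compose N))$ and showing they coincide as ordered, labeled sets, and symmetrically for the right interfaces. For the left interface this amounts to unfolding the definition twice: on one side I first form $^\ast(L\compose M) = {^\ast L}\,\cup\,\mathrm{matchfree}(^\ast M, L^\ast)$ and then extend it by the match-free part of $^\ast N$ against $(L\compose M)^\ast$; on the other side I form $^\ast(M\compose N)$ first and then prepend $^\ast L$ and its surviving match-free remainder. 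The key lemma to prove here is a counting identity: a gate of $^\ast N$ is match-free against $(L\compose M)^\ast = N^\ast \ldots$ — more precisely, against the right interface of $L\compose M$, namely $M^\ast\,\cup\,\mathrm{matchfree}(L^\ast\text{-side})$ — exactly when it is match-free against $M^\ast$ alone, because the match-free remainder of $M^\ast$ toward the right consists only of differently labeled or higher-indexed $M^\ast$ gates, and matching depends solely on the $\lambda$-indexed position within a fixed label class.

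The heart of the argument, and the main obstacle, is the three-interface matching lemma governing which gates fuse into the interior. I would formalize it as follows: for a fixed label $\lambda$, index the $\lambda$-gates of each interface $0,1,2,\dots$ by their order among $\lambda$-gates. Matching of $A^\ast$ with $^\ast B$ pairs the $k$-th $\lambda$-gate of $A^\ast$ with the $k$-th $\lambda$-gate of $^\ast B$ whenever both exist. The claim is that the set of interior fusion-clusters produced by $(L\compose M)\compose N$ equals that produced by $L\compose(M\compose N)$, as multisets of gate-sets, together with the induced edge structure. I expect the delicate point to be the bookkeeping of the match-free carry-over: when $M^\ast$ has more $\lambda$-gates than $^\ast N$, the surplus high-index gates pass into $(M\compose N)^\ast$ and become available to match against $L^\ast$ — and I must verify this yields exactly the same clusters as first matching $L^\ast$ against $^\ast M$. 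Because matching is governed by the absolute $\lambda$-index and the left-to-right concatenation of interfaces preserves those indices (match-free remainders are always appended \emph{after} the carried-over gates), these indices line up, and the fusion is well defined independently of bracketing.

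Finally I would treat the edges and the shared gates. Each edge of $L$, $M$, or $N$ has endpoints that are either interior (unchanged) or interface gates; its image in either bracketing is determined by which fusion-cluster each endpoint lands in, so once the cluster identification above is established, the edge sets agree automatically by the four-case edge definition applied twice. For shared gates I would invoke the supplementary clauses of the composition definition, checking that a gate belonging to both interfaces of a module is relocated into the appropriate interface of the composite in a bracket-independent way; here I would lean on the already-proved interface identities. I expect the shared-gate case to require only a routine verification once the core counting lemma is in hand, so the genuine difficulty is concentrated entirely in proving that $\lambda$-index-based matching, together with ordered concatenation of match-free remainders, is associative.
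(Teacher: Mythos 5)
The paper does not actually prove this theorem: it states it and defers entirely to the cited reference \cite{Reisig\_19}, so there is no in-paper argument to compare yours against. Judged on its own, your plan is sound and identifies the right decomposition (interfaces first, then edges, then shared gates) and the right key lemma, namely that matching by $\lambda$-relative index is associative because match-free remainders are always appended \emph{after} the carried-over gates, so a gate's $\lambda$-index in $(L\compose M)^\ast$ or in $^\ast(M\compose N)$ is computable from the three original interfaces independently of bracketing. The index bookkeeping you describe does check out: if $^\ast M$, $M^\ast$, $L^\ast$, $^\ast N$ have $p,r,q,s$ gates of label $\lambda$, then under either bracketing the $j$-th $\lambda$-gate of $L^\ast$ pairs with the $j$-th of $^\ast M$ for $j<p$ and with the $(j-p+r)$-th of $^\ast N$ for $p\le j<p+\max(0,s-r)$, and dually for $^\ast N$; the resulting match-sets and the order of the surviving gates ($N^\ast$, then the $M^\ast$ surplus, then the $L^\ast$ surplus, and symmetrically on the left) coincide literally, so strict equality rather than mere equivalence is attainable in the shared-gate-free case.

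The one place where you are too optimistic is the shared-gate case, which you defer as routine. If $m\in{}^\ast M\cap M^\ast$ is matched by $l\in L^\ast$ and by $n\in{}^\ast N$, the literal set-theoretic construction of the paper's definitions yields the node $\{\{l,m\},n\}$ under one bracketing and $\{l,\{m,n\}\}$ under the other; these are equal only up to a canonical flattening, not as sets. A complete proof must therefore either define merged gates as flattened clusters (equivalence classes of the original gates) or settle for equality up to a canonical isomorphism; as stated, your Step 3 would not deliver the on-the-nose equality claimed by the theorem. This is a repairable representational issue, not a flaw in the combinatorial core of your argument, but it is precisely the kind of detail that the external proof has to confront, and you should not label it routine.
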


\cite{Reisig_19} presents a detailed proof of this Theorem. The example in Figure~\ref{fig:07} shows that the intermediate modules $L \compose M$ and $M \compose N$ look quite different. Associativity is in fact a fundamental property of modules. Without this property, the composition calculus would deserve few acceptance.

\begin{figure}[htb]
\centering
\includegraphics[scale=.3]{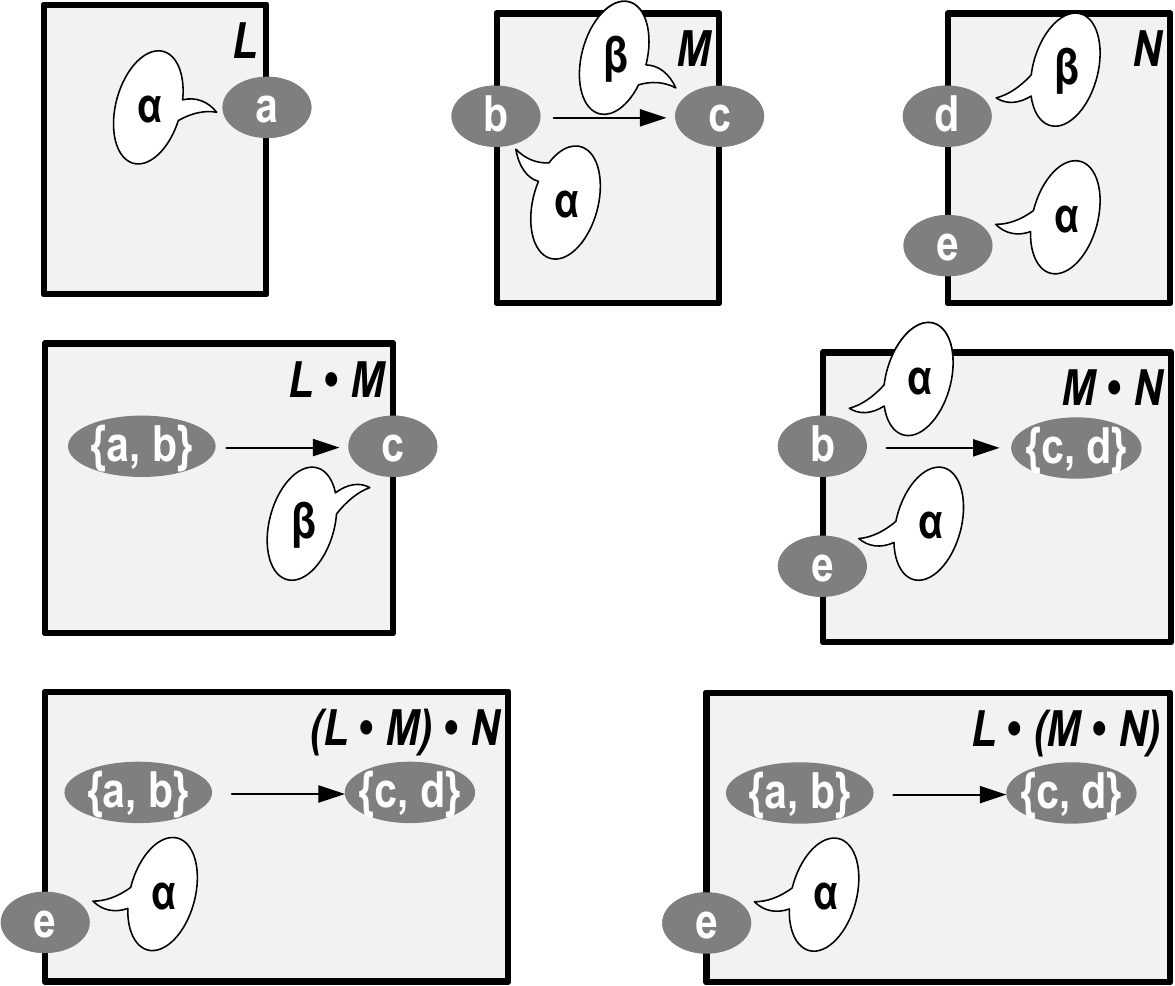}
\caption{Exemplifying the associativity theorem}
\label{fig:07}
\end{figure}

\subsection{Cancelativity}\label{sec:3:2}
Intuitively formulated, composition of modules does not resemble scrambled eggs, but composed words over an alphabet. Composition “preserves” the involved modules; in algebraic terms, composition is \emph{cancellative}:

\begin{theorem}
Let $L$, $M$, and $N$ be modules. 
\begin{enumerate}
\item If $L \compose M = L \compose N$, then $M = N$.
\item If $M \compose L = N \compose L$, then $M = N$.
\end{enumerate}
\end{theorem}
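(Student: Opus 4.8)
The plan is to prove both statements by exhibiting an explicit \emph{decomposition}: for the fixed module $L$, I will reconstruct $M$ from the composite, so that composing with $L$ on a fixed side is injective. I treat part~1 ($K = L \compose M$) in detail; part~2 is the mirror image. The whole argument is a provenance analysis, tracing each ingredient of the composition definition back to its origin, and relies only on the fact that composition is defined entirely from the interfaces.

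First I classify the nodes of $K$ by provenance. By the definition of composition along interfaces, the nodes of $K$ split into the pieces $L \setminus L^\ast$, the matches of $L^\ast$ and $^\ast M$ (the two-element nodes $\{x,y\}$ with $x \in L^\ast$), the match free gates of $L^\ast$, the match free gates of $^\ast M$, and $M \setminus {^\ast M}$. Since $L$ is known — its node set, its ordered interface $L^\ast$, and which of its right gates are match free — each piece can be recognised inside $K$: the $L$-interior and the match free $L^\ast$-gates are literally nodes of $L$; the matches are exactly the $K$-nodes of the form $\{x,y\}$ with $x \in L^\ast$; everything else is $M$-material. This already recovers the node set of $M$, since a matched gate $y \in {^\ast M}$ is the unique element of its match-object not lying in $L^\ast$.

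Next I reconstruct the interfaces and edges. The right interface of $K$ is $M^\ast \cup$ matchfree$(L^\ast,{^\ast M})$ with $M^\ast$ ordered first, and the appended gates are match free gates of $L^\ast$, known from $L$; hence $M^\ast$ is read off $K^\ast$ verbatim, order included. The left interface is the delicate one: the match free $\lambda$-gates of $^\ast M$ appear, in their original order, in $^\ast K$ after $^\ast L$, while the matched $\lambda$-gates have migrated into the interior. Here the matching rule is essential: the $i$-th $\lambda$-gate of $L^\ast$ matches the $i$-th $\lambda$-gate of $^\ast M$, so the match-object containing the $i$-th $\lambda$-gate of $L^\ast$ exposes the $i$-th $\lambda$-gate of $^\ast M$. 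Concatenating the matched gates (ranks $1,\dots,r$) with the match free gates read from $^\ast K$ reconstructs, for each label, the full ordered list of $\lambda$-gates of $^\ast M$. Finally the edges of $M$ are obtained by restricting the edges of $K$ to $M$-material and replacing each match-object by the $^\ast M$-gate it contains, inverting the merge step of the composition.

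Since every recovered item is computed from $L$ and $K$ alone, the equality $L \compose M = L \compose N$ forces the two reconstructions to coincide, whence $M$ and $N$ agree on nodes, edges, right interface, and — by the matching rule — on the order of equally labelled gates of the left interface; this is module equality as used here. Part~2 runs identically with the interfaces interchanged: from $K = M \compose L$ one reads $^\ast M$ directly off $^\ast K$ and recovers $M^\ast$ from the match-objects together with $K^\ast$. I expect the one genuinely delicate point to be the order of the \emph{matched} gates: such a gate becomes an orderless interior match-object, so its position relative to differently labelled gates of the inner interface leaves no trace in $K$. This is precisely the ambiguity declared irrelevant by the equivalence of Section~\ref{sec:2:6}; the matching rule still fixes the order among equally labelled gates, which is all the reconstruction requires. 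A final, routine check extends the provenance analysis to the two replacement rules of the shared-gate composition.
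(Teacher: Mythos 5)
Your route is genuinely different from the paper's. The paper argues by contradiction: it assumes $M \neq N$, picks a node or arc witnessing the difference, and claims the difference survives into $L \compose M$ versus $L \compose N$. You instead prove injectivity of $M \mapsto L \compose M$ by exhibiting an explicit reconstruction of $M$ from $L$ and $K = L \compose M$. Your approach buys more: it is constructive (it substantiates the paper's own remark after the theorem that, knowing the cut, the components can be re-computed from the composite), and it is the only one of the two that confronts the interfaces at all --- the paper's proof treats only nodes and arcs and never addresses how a difference in the \emph{order} or \emph{membership} of interface gates would propagate, which is exactly the delicate point you isolate and resolve via the rank argument for equally labelled matched gates. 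Both proofs tacitly assume the node sets of $L$ and $M$ are disjoint so that provenance is unambiguous; I will not press that.

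There is, however, one step of your reconstruction that fails, and it is the edge analogue of the order ambiguity you did flag. Consider an edge of $K$ of the form $\{\{x,x'\},\{y,y'\}\}$ between two match objects, with $x, y \in L^\ast$ and $x', y' \in {^\ast M}$. By the composition definition this edge is present whenever $(x,y)$ is an edge of $L$ \emph{or} $(x',y')$ is an edge of $M$; the two sources produce the same merged edge. So if $L$ happens to contain the edge $(x,y)$ between two of its matched right gates, then $K$ contains $\{\{x,x'\},\{y,y'\}\}$ whether or not $M$ contains $(x',y')$, and your rule ``restrict to $M$-material and invert the merge'' cannot decide whether to emit $(x',y')$. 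Concretely, taking $M$ and $N$ identical except that $M$ has the edge $(x',y')$ and $N$ does not, one gets $L \compose M = L \compose N$ with $M \neq N$, so no reconstruction can exist and strict cancellativity fails on this configuration. To be fair, the paper's proof has exactly the same blind spot: its case (ii) asserts that an arc of $M$ ``embedded in a matching'' of $L \compose M$ is absent from $L \compose N$, which is false when $L$ supplies the same merged edge. The honest repair, for both proofs, is either to exclude edges between two matched gates (respectively, to state the theorem up to the equivalence of Section~\ref{sec:2:6}, as you already do for the order of differently labelled matched gates), or to add the hypothesis that $L$ has no edge between two gates of the relevant interface. You should state which of these you are assuming; with that caveat added, your reconstruction goes through and is the more rigorous of the two arguments.
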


\begin{proof}
By contradiction, assume $M \neq N$. Then with no loss of generality $M$ has a node $m$ or an arc $(m, m')$ such that (i) $m$ is no node of $N$, or (ii) $(m, m')$ is no arc of $N$. In case of (i), $m$ is a node of $L \compose M$, but no node of $L \compose N$. In case of (ii), $(m, m')$ is an arc of $L \compose M$ or is embedded in a matching of $L \compose M$ but $(m, m')$ is no arc of $L \compose N$ or is not embedded in a matching of $L \compose N$. Hence, $L \compose M \neq L \compose N$.
\end{proof}

Cancelativity does not imply that the components $M$ and $N$ can be deduced from the composed module $M \compose N$. In fact, $M \compose N$ may be equal to $M' \compose N'$, with modules $M'$ and $N'$ that differ from $M$ and $N$. Nevertheless, knowing the \emph{cut} between $M$ and $N$, both modules can be re-computed from $M \compose N$. Details follow in Section~\ref{sec:3:3}.

\subsection{Commutativity}\label{sec:3:3}
Composition of modules is \emph{not commutative}, i.e., in general, $M \compose N \neq N \compose M$. However, $M \compose N$ and $N \compose M$ are interface equivalent in case the interfaces of $M$ and of $N$ share no labels. Figure~\ref{fig:08} shows four interface equivalent modules of commutative composition of two modules $M$ and $N$.

\begin{figure}[htb]
\centering
\includegraphics[scale=.3]{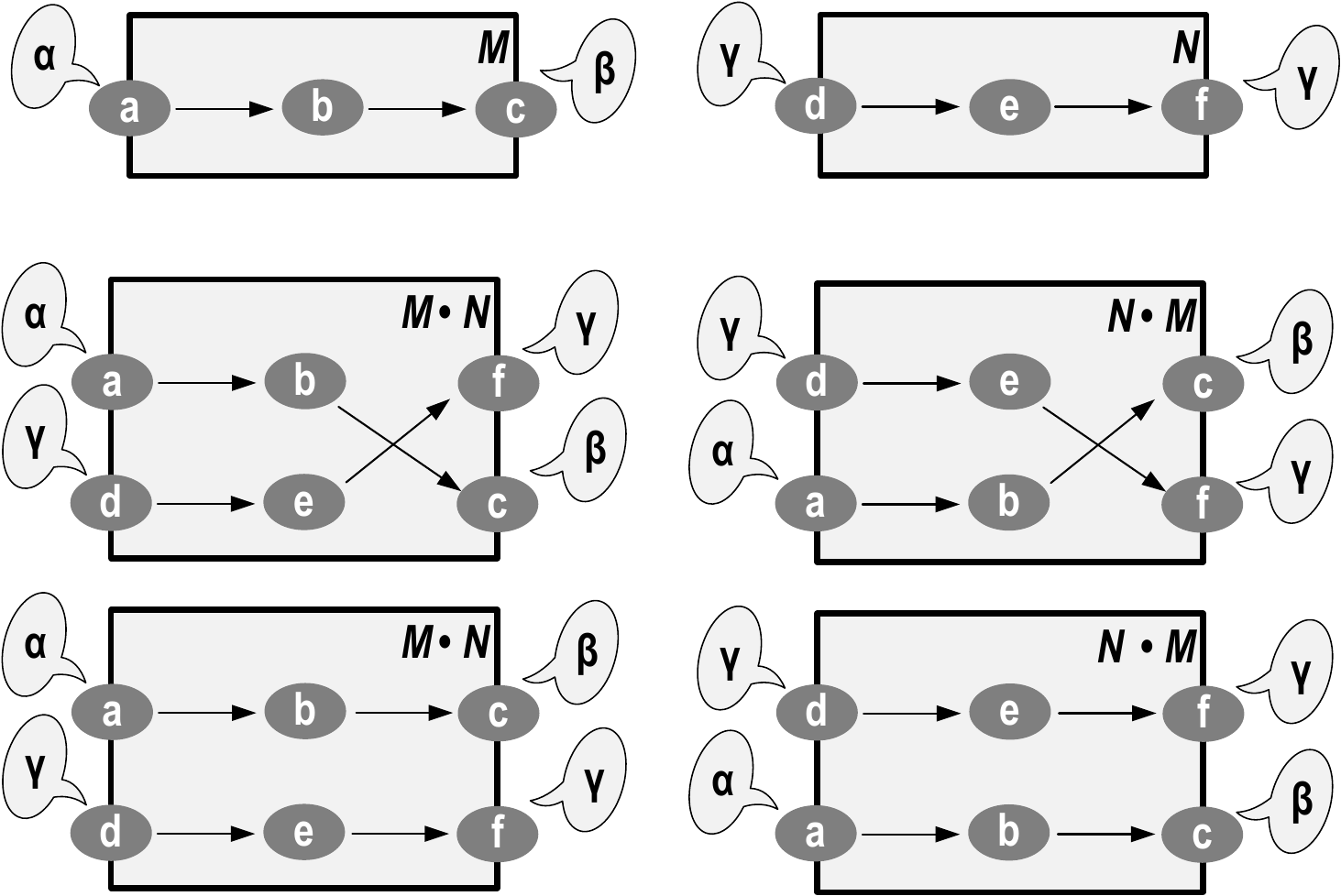}
\caption{Four interfaces equivalent representations of commutative composition of $M$ and $N$}
\label{fig:08}
\end{figure}

\begin{definition}
Two modules $M$ and $N$ are \emph{entangled} if and only if there are gates $m \in M ^\ast \cup {^\ast M}$ and $n \in N ^\ast \cup {^\ast N}$ which are equally labeled.
\end{definition}

\begin{theorem}For two modules $M$ and $N$ holds: $M \compose N$ and $N \compose M$ are interface equivalent if and only if M and N are not entangled.
\end{theorem}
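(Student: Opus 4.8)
The plan is to collapse interface equivalence to a numerical condition on label multiplicities and then settle both implications by counting. First I would isolate the following consequence of the definition of equivalent interfaces: two interfaces are equivalent if and only if they contain the same number of $\lambda$-labeled gates for every label $\lambda$. Indeed, condition~(2) forces any witnessing $\phi$ to be the unique order isomorphism on each label class, while the relative order of differently labeled gates is left unconstrained; hence a suitable $\phi$ exists precisely when every label class has equal size on both sides. Applying this to the two interfaces of each composite, $M \compose N$ and $N \compose M$ are interface equivalent if and only if, for every $\lambda$, their left interfaces carry equally many $\lambda$-gates and their right interfaces do too.

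Second, fix a label $\lambda$ and let $a,b,c,d$ denote the numbers of $\lambda$-gates in ${^\ast M}$, $M^\ast$, ${^\ast N}$, $N^\ast$, respectively. Since a match pairs equally labeled gates of equal rank, the number of $\lambda$-matches of $M^\ast$ and ${^\ast N}$ is $\min(b,c)$, so that $\mathrm{matchfree}({^\ast N}, M^\ast)$ retains $\max(0,c-b)$ of the $\lambda$-gates and $\mathrm{matchfree}(M^\ast,{^\ast N})$ retains $\max(0,b-c)$. Reading off the composition definition, the left and right $\lambda$-counts of $M \compose N$ are $a+\max(0,c-b)$ and $d+\max(0,b-c)$, and by the symmetric construction those of $N \compose M$ are $c+\max(0,a-d)$ and $b+\max(0,d-a)$.

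Third, I would equate the left counts and then the right counts. Writing $\max(0,t)=\tfrac12(t+|t|)$ and using $(x+y)-|x-y|=2\min(x,y)$, both equations collapse to the \emph{same} symmetric identity $\min(b,c)=\min(a,d)$; intuitively, the number of $\lambda$-matches swallowed into the interior of $M \compose N$ must equal the number swallowed into the interior of $N \compose M$. Thus the two composites are interface equivalent exactly when $\min(b,c)=\min(a,d)$ holds for the multiplicities $a,b,c,d$ of every label. The ``if'' direction of the Theorem is then immediate: if $M$ and $N$ are not entangled, then for each $\lambda$ either $a=b=0$ or $c=d=0$, whence both minima vanish and the identity holds. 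The ``only if'' direction I would attack by contraposition, extracting from an entanglement a label at which the two minima differ.

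That last step is where I expect the real difficulty to sit. A shared label by itself does not force $\min(b,c)\neq\min(a,d)$ — a label occurring once in ${^\ast M}$ and once in ${^\ast N}$, but on neither right interface, yields $\min(b,c)=\min(a,d)=0$ — so the converse cannot be obtained merely by exhibiting a common label. The crux is to show that genuine entanglement always produces a label whose multiplicities across the \emph{complementary} interface pairs $M^\ast,{^\ast N}$ and ${^\ast M},N^\ast$ are asymmetric. I would therefore split on the four possible positions of the shared gate and, in the delicate case where the shared label sits only on like-oriented interfaces, check carefully whether closing the argument requires the shared-gate clause of the composition definition or a sharper reading of ``entangled''. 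Reconciling the clean criterion $\min(b,c)=\min(a,d)$ with plain non-entanglement is the point I would scrutinise most.
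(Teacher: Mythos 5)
Your reduction of interface equivalence to label multiplicities is correct and is a genuinely different route from the paper, which instead argues by a case analysis on the possible positions of two equally labeled gates and on gate orderings. The counting is right: condition (2) of interface equivalence only constrains the order \emph{within} each label class, so two interfaces are equivalent iff every label has the same multiplicity in both; for a label with multiplicities $a,b,c,d$ in ${^\ast M}, M^\ast, {^\ast N}, N^\ast$, the number of matches of $M^\ast$ and ${^\ast N}$ is $\min(b,c)$, and both the left and the right comparison collapse to $\min(b,c)=\min(a,d)$. The ``if'' direction then follows at once, since non-entanglement forces $a=b=0$ or $c=d=0$ for every label. This half is complete and arguably cleaner than the paper's argument.

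The obstacle you flag in the ``only if'' direction is not a defect of your method: that direction is false under the paper's definition of entanglement, and the example you sketch is a genuine counterexample. Take $\Lambda=\{\alpha\}$, let $M$ be a single $\alpha$-labeled gate $m$ with ${^\ast M}=\{m\}$ and $M^\ast=\emptyset$, and let $N$ be a disjoint copy with gate $n$. Then $M$ and $N$ are entangled ($m\in M^\ast\cup{^\ast M}$ and $n\in N^\ast\cup{^\ast N}$ carry the same label), yet no match is formed in either composition: ${^\ast(M\compose N)}$ and ${^\ast(N\compose M)}$ are the two orderings of $\{m,n\}$, which are equivalent via the swap bijection, and both right interfaces are empty. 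Here $\min(b,c)=\min(a,d)=0$, consistent with your criterion. The paper's proof glosses over exactly this configuration: it declares the case $m\in{^\ast M}$, $n\in{^\ast N}$ ``analogous'' to $m\in M^\ast$, $n\in{^\ast N}$, but no match arises there, and its ordering argument in Case~2 implicitly assumes the witnessing bijection is the identity, which equivalence does not require. The honest conclusion of your approach is the corrected statement: $M\compose N$ and $N\compose M$ are interface equivalent iff $\min(b,c)=\min(a,d)$ holds for every label; non-entanglement is sufficient but not necessary. Do not try to force the contraposition through --- it cannot be closed with the definitions as given.
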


\begin{proof}
To prove “$\rightarrow$”, by contradiction assume $M$ and $N$ are entangled. Case~1: there are gates $m \in M ^\ast$ and $n \in {^\ast N}$ which are equally labeled. Then $(m, n)$ is a match of $M ^\ast$ and $^\ast N$. Then $(m, n)$ goes to the interior of $M \compose N$, but not to the interior of $M \compose N$. Then $M \compose N$ and $N \compose M$ are not interface equivalent. Case~2: there are gates $m \in M ^\ast$ and $n \in N ^\ast$ which are equally labeled. Then in $M \compose N$, $m$ either has a matching partner $p$ in $^\ast N$, or $m$ is ordered after $n$ in $(M \compose N) ^\ast$. Then, in $N \compose M$, $p$ is no matching partner in $^\ast N$ or $m$ is ordered before $n$ in $(N \compose M) ^\ast$. Then $M \compose N$ and $N \compose M$ are not interface equivalent. The cases of equally labeled gates $m \in M ^\ast$ and $n \in {^\ast N}$ and of $m \in {^\ast M}$ and $n \in {^\ast N}$ are treated analogously. 

To prove “$\leftarrow$”, assume $M$ and $N$ are not entangled. In this case, $^\ast (M \compose N) = {^\ast M} \cup {^\ast N} = {^\ast N} \cup {^\ast M} = {^\ast (N \compose M)}$. Likewise, $(M \compose N) ^\ast = (N \compose M) ^\ast$.
\end{proof}

Commutativity is relevant in many contexts. Such contexts can be characterized in terms of \emph{patterns}. A typical pattern includes a module $M$ that may occur in $n$ parameterized versions, $M_1, \dots, M_n$ with pairwise disjoint parameters. The modules $M_i$ can safely be composed in any order. Another pattern exploits hierarchical concepts: In Figure~\ref{fig:09}, composition of the modules $L$ and $N$ is not commutative nor is the composition of $M$ and $N$ commutative. However, $N$ is commutative with $P \defeq L \compose M$.

\begin{figure}[htb]
\centering
\includegraphics[width=1\textwidth]{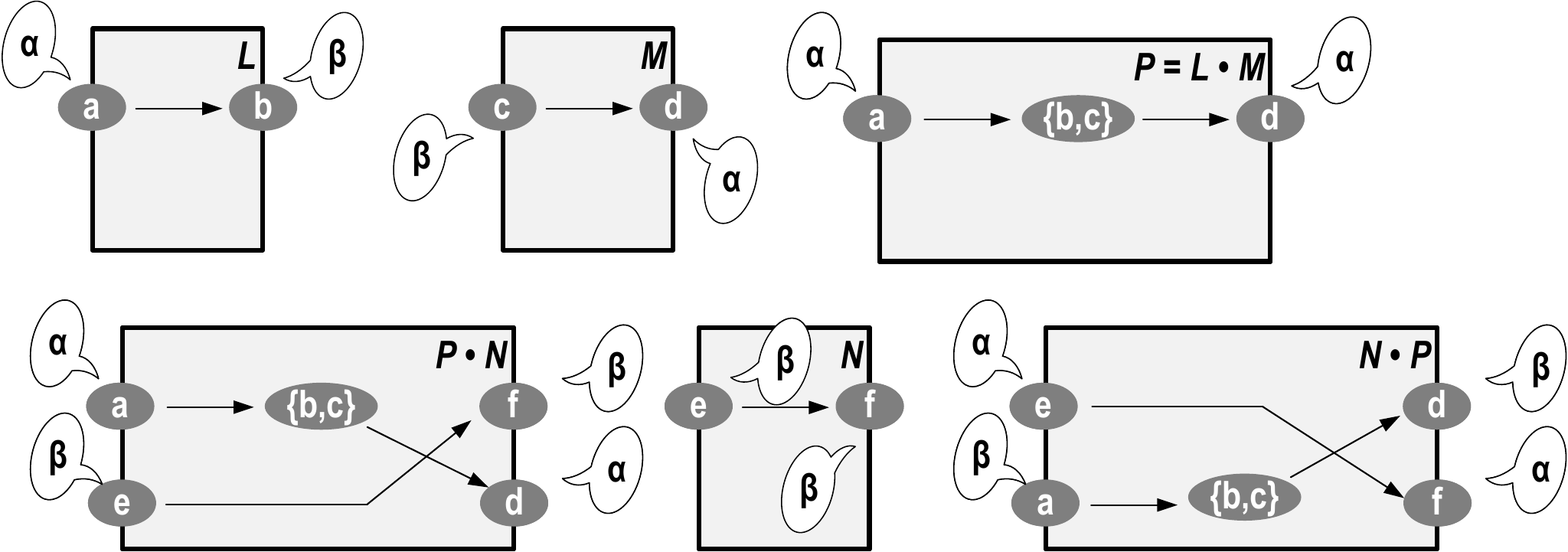}
\caption{Hierarchy of modules}
\label{fig:09}
\end{figure}

\subsection{Equidivisibility}\label{sec:3:4}
As discussed in Section~\ref{sec:3:3}, two modules $M$ and $N$ commute if the label sets of their interfaces are disjoint. Now we consider an almost complementary case, where label sets are not disjoint.

\begin{definition}[precedence] Let $M$ and $N$ be modules, where $M ^\ast$ and $^\ast N$ share at least on label. Then $M$ \emph{precedes} $N$.
\end{definition}
	
Now assume two pairs of preceding modules, such that the composition of each pair results in the same module. Then the components share an overlapping module: 

\begin{theorem}
Let $K, L, M, N$ be modules, let $K$ precede $L$ and $M$ precede $N$. Furthermore, assume $K \compose L = M \compose N$. Then there exists a module $P$ such that either $K \compose P = M$ and $P \compose N = L$, or $M \compose P = K$ and $P \compose L = N$.
\end{theorem}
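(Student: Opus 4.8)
The plan is to prove this as an equidivisibility (Levi-type) statement: the two decompositions $K \compose L$ and $M \compose N$ of the common module $C \defeq K \compose L = M \compose N$ correspond to two ``cuts'' through $C$, and $P$ will be the sub-module lying between them. A first useful reduction is that I only ever need to produce the \emph{first} equation in each alternative. Indeed, if I can exhibit $P$ with $K \compose P = M$, then by the associativity theorem $K \compose (P \compose N) = (K \compose P) \compose N = M \compose N = K \compose L$, and left-cancellativity yields $P \compose N = L$; symmetrically, from $M \compose P = K$ I obtain $P \compose L = N$. So the whole theorem reduces to: under an appropriate ``side'' assumption, construct $P$ with $K \compose P = M$.

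The structural heart is to decide which alternative holds, and this is read off from the interfaces. By the ordering clauses in the definition of composition, ${}^\ast K$ is an initial segment of the totally ordered left interface ${}^\ast C = {}^\ast K \cup \text{matchfree}({}^\ast L, K^\ast)$, and likewise ${}^\ast M$ is an initial segment of the same ordered set ${}^\ast C$. Two initial segments of a total order are nested, so without loss of generality ${}^\ast K$ is an initial segment of ${}^\ast M$ (the reverse inclusion gives the other alternative, by the symmetry exchanging $(K,L)$ with $(M,N)$). The same argument applied to $C^\ast$ shows that $L^\ast$ and $N^\ast$ are both initial segments of $C^\ast$ and hence nested. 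I then argue that the two nestings are consistent, i.e.\ ${}^\ast K \subseteq {}^\ast M$ forces $N^\ast \subseteq L^\ast$: using the partition $\text{int}(C) = \text{int}(K) \cup \text{matches}(K^\ast, {}^\ast L) \cup \text{int}(L) = \text{int}(M) \cup \text{matches}(M^\ast, {}^\ast N) \cup \text{int}(N)$, together with the hypotheses that $K$ precedes $L$ and $M$ precedes $N$, a crossing of the cuts (say ${}^\ast K \subseteq {}^\ast M$ but $L^\ast \subsetneq N^\ast$) is ruled out by a containment argument on these interior nodes.

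Granting the consistent nesting ${}^\ast K \subseteq {}^\ast M$ and $N^\ast \subseteq L^\ast$, I construct $P$ as the portion of $C$ strictly between the two cuts. Concretely, the interior of $P$ is $\text{int}(M) \cap \text{int}(L)$ together with the match-nodes of $\text{matches}(K^\ast, {}^\ast L)$ that lie in $\text{int}(M)$; the left interface ${}^\ast P$ consists of the ${}^\ast L$-gates that are the $K^\ast$-partners of those matches (ordered as in $C$), so that $K^\ast$ and ${}^\ast P$ match exactly as $K^\ast$ and ${}^\ast L$ did; the right interface $P^\ast$ consists of the gates of $M^\ast$ that are not matchfree remnants of $K^\ast$, which are precisely the partners of ${}^\ast N$, ordered as in $M^\ast$; and the edges of $P$ are the edges of $C$ with both endpoints in this region. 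With this $P$ one checks $K \compose P = M$ directly against the composition definition: the node sets agree because $\text{int}(K) \cup \text{matches}(K^\ast, {}^\ast P) \cup \text{int}(P) = \text{int}(M)$ by construction; the matchfree bookkeeping gives ${}^\ast(K \compose P) = {}^\ast K \cup \text{matchfree}({}^\ast P, K^\ast) = {}^\ast M$ and $(K \compose P)^\ast = P^\ast \cup \text{matchfree}(K^\ast, {}^\ast P) = M^\ast$, the orders matching because ${}^\ast K$ is a prefix of ${}^\ast M$ and $P^\ast$ a prefix of $M^\ast$; and the edge conditions match clause-by-clause. The second equation $P \compose N = L$ then comes for free from the reduction above.

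I expect the main obstacle to be the second paragraph: proving that the two nestings are consistent, i.e.\ that the cuts cannot cross. The left and right interface nestings are established independently and could a priori point in opposite directions; ruling this out is exactly where the ``precedes'' hypotheses are needed, and it requires a genuine comparison of the two interior-node partitions of $C$ rather than a purely formal manipulation. A secondary, purely technical difficulty is the match/matchfree and ordering bookkeeping in verifying $K \compose P = M$, in particular confirming that the induced total orders on ${}^\ast P$ and $P^\ast$ are precisely the ones that reconstruct the prescribed ordered interfaces of $M$.
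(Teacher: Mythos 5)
The paper offers essentially no proof here: it disposes of the theorem in one sentence by declaring it ``an application of Levi's Lemma'' with a citation to de Luca--Varricchio. Your direct construction is therefore a genuinely different (and more honest) route, since invoking Levi's Lemma presupposes that the monoid of modules is equidivisible, which is exactly what is to be shown. Your reduction via associativity and left/right cancellativity --- from $K \compose P = M$ deduce $P \compose N = L$ for free --- is correct and uses precisely the algebraic machinery the paper sets up in Section~3, and the observation that ${}^\ast K$ and ${}^\ast M$ are nested initial segments of ${}^\ast(K \compose L)$ is a sound starting point.

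However, there is a genuine gap, and it sits exactly where you flag it --- but it is wider than you suggest. The interfaces of $C = K \compose L$ determine only the gates; they say nothing about how the \emph{interior} nodes and edges of $C$ are distributed between $K$ and $L$ versus between $M$ and $N$. Your construction of $P$ needs $\mathrm{int}(K) \subseteq \mathrm{int}(M)$ and $\mathrm{int}(N) \subseteq \mathrm{int}(L)$ (and a corresponding containment of the match-nodes and of the edge sets), and nothing in the nesting of initial segments of ${}^\ast C$ and $C^\ast$ delivers this: the two partitions $\mathrm{int}(C) = \mathrm{int}(K) \cup \mathrm{matches}(K^\ast,{}^\ast L) \cup \mathrm{int}(L) = \mathrm{int}(M) \cup \mathrm{matches}(M^\ast,{}^\ast N) \cup \mathrm{int}(N)$ are a priori unrelated, since the interior carries no order. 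You would also need to say concretely how the precedence hypothesis enters; it cannot merely ``rule out a crossing,'' because without it the theorem is false (two non-entangled decompositions commute and need not share a middle factor $P$ at all), so the hypothesis must be doing load-bearing work in the containment argument you defer. Until the partition-compatibility step is actually carried out, the proposal establishes the interface-level skeleton of the theorem but not the theorem itself.
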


\begin{proof}
This Theorem is an application of Levi's Lemma, as formulated e.g. in \cite{de_Luca_99}.
\end{proof}

This theorem structures the refinements of modules: Assume a module $Q$ is refined in two different ways, $Q = K \compose L$ and $Q = M \compose N$. Then there exists a unique most abstract refinement of $Q$ in terms of $K, L, M$, and $N$. To construct this refinement, assume with no loss of generality that $K$ is a prefix of $M$. Then there is a module $P$ with $K \compose P = M$. Then $K \compose P \compose N = Q$. Figure~\ref{fig:10} sketches this construction.

\begin{figure}[htb]
\centering
\includegraphics[scale=.3]{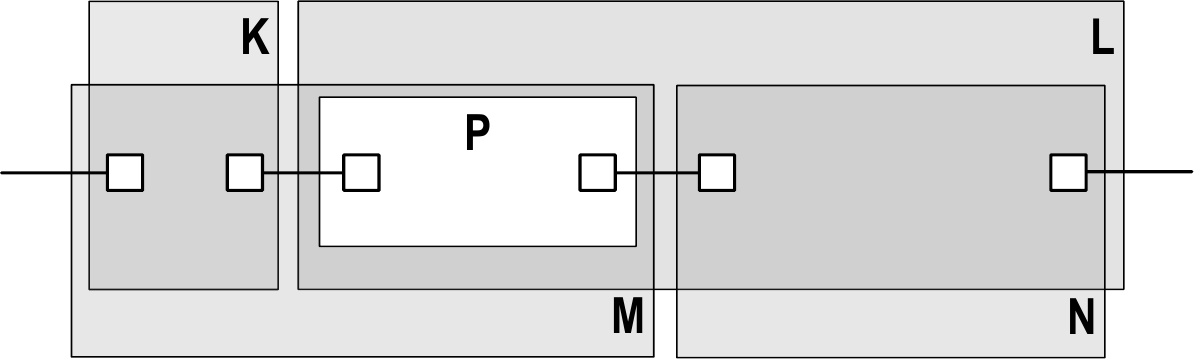}
\caption{Let $Q \defeq K \compose L = M \compose N$. $M$ and $L$ overlap with module $P$. Furthermore, $Q = K \compose P \compose N$.}
\label{fig:10}
\end{figure}

\subsection{Neutral module}\label{sec:3:5}
There exists a \emph{neutral} module, i.e. a module that has no effect upon composition:

\begin{definition}[neutral module] The module $E \defeq (\emptyset, \emptyset)$ is denoted as the \emph{neutral module}. 
\end{definition}

\begin{lemma}
For all modules $M$ holds: $M \compose E = E \compose M = M$. 
\end{lemma}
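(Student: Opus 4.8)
The plan is to compute $M \compose E$ and $E \compose M$ directly from the composition definitions, exploiting that both interfaces of $E \defeq (\emptyset, \emptyset)$ are empty. The decisive observation is that matching is vacuous against an empty interface: since a match is a pair consisting of one gate from each interface, we have matches$(M ^\ast, {^\ast E}) = \emptyset$ and matches$(E ^\ast, {^\ast M}) = \emptyset$. Consequently matchfree$(M ^\ast, {^\ast E}) = M ^\ast$ while matchfree$({^\ast E}, M ^\ast) = \emptyset$, and symmetrically matchfree$({^\ast M}, E ^\ast) = {^\ast M}$ while matchfree$(E ^\ast, {^\ast M}) = \emptyset$. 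With these facts, every clause of the composition definition collapses.

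First I would treat $M \compose E$. For the underlying graph I would substitute the above sets into the node formula of the graph composition, namely $(G \setminus M ^\ast) \cup (\emptyset \setminus \emptyset) \cup \emptyset \cup M ^\ast \cup \emptyset$, which is exactly the node set of $M$; and since there are no matches, every edge of $M$ falls into the ``both endpoints match free'' case and is carried over unchanged, while $E$ contributes no edges, so the graph of $M \compose E$ is the graph of $M$. For the interfaces I would read off $^\ast(M \compose E) = {^\ast M} \cup$ matchfree$({^\ast E}, M ^\ast) = {^\ast M}$ and $(M \compose E) ^\ast = E ^\ast \cup$ matchfree$(M ^\ast, {^\ast E}) = M ^\ast$. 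The prescribed orders degenerate because the appended match-free sets are empty, so the orders are literally those of $^\ast M$ and $M ^\ast$.

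Next I would handle $E \compose M$ symmetrically, obtaining the same graph together with $^\ast(E \compose M) = {^\ast E} \cup$ matchfree$({^\ast M}, E ^\ast) = {^\ast M}$ and $(E \compose M) ^\ast = M ^\ast \cup$ matchfree$(E ^\ast, {^\ast M}) = M ^\ast$. Finally I would confirm that the shared-gate clauses contribute nothing: both replacement rules are conditioned on the existence of a match of $M ^\ast$ and $^\ast E$ (respectively $E ^\ast$ and $^\ast M$), and since no such match exists, no gate is replaced. Combining the pieces yields $M \compose E = M = E \compose M$ as modules, including their interface orders.

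I expect no genuine obstacle here: the argument is a careful but routine case-by-case substitution, and the only thing to watch is bookkeeping---checking the node set, the four edge cases, both interfaces together with their orderings, and the (empty) shared-gate complement---so that ``$=$'' is established as literal module equality and not merely as interface equivalence.
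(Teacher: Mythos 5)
Your computation is correct and complete: the paper states this lemma without proof, and your direct unfolding of the definitions (no matches against an empty interface, hence matchfree sets collapse to $M^\ast$ resp.\ $^\ast M$, empty appended orders, and vacuous shared-gate clauses) is exactly the routine verification the authors evidently considered too obvious to record. Your care in checking literal module equality, including the interface orderings, rather than mere interface equivalence, is the right level of rigor here.
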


\section{Classes of modules}\label{sec:4}
This Section focuses on classes of modules with distinguished properties. We start with the observation that some, but not all, interesting sets of modules can be generated from a finite set of “basic” modules. Secondly, details of the interior of modules are not always important; sometimes only the interfaces count. The interior is abstracted away and replaced by one or no interior vertex. We finish with a particular class of modules that mimic the classical set of words over a given alphabet.

\subsection{Finitely generated sets of modules}\label{sec:4:1}
Each given finite set $T$ of modules generates a set $S$ of modules: starting with $S \defeq T$, iteratively, for each two modules $M$ and $N$ in $S$, the composition $M \compose N$ is added to $S$. Formulated differently: 

\begin{definition}
A set $S$ of modules is \emph{generated by a set $T$ of modules} if $S$ is the smallest set of modules that contains $T$ and is closed under composition.
\end{definition}

\begin{definition}
A set $S$ of modules is \emph{finitely generated} if $S$ is generated by a finite set $T$ of modules.
\end{definition}

\begin{theorem}
The set $\underline{M}$ of all modules over $\Lambda$, as introduced in Section~\ref{sec:2:4}, is not finitely generated. 
\end{theorem}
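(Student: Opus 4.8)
The plan is to reduce generation to a normal form and then exhibit a simple invariant that every generated module must respect but that $\underline{M}$ violates. First I would observe, using the Associativity Theorem together with closure under composition, that if $S$ is generated by $T$ then every module in $S$ is a finite composition $G_1 \compose G_2 \compose \dots \compose G_m$ with each $G_i \in T$: the set of all such finite compositions contains $T$, is closed under $\compose$ (concatenate the two words), and is contained in every composition-closed superset of $T$, hence equals $S$. Thus it suffices to bound some feature of $G_1 \compose \dots \compose G_m$ in terms of $G_1, \dots, G_m$ alone.

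The feature I would use is the set of labels occurring on the gates of a module. The key lemma is that composition introduces no new labels: by the definition of \emph{match}, a match $\{x,y\}$ pairs two gates carrying the \emph{same} label $\lambda$, and every gate of $M \compose N$ is either a match-free gate inherited with its label from $^\ast M$, $M^\ast$, $^\ast N$ or $N^\ast$, or a shared match that again carries the common label of its two constituents. Consequently the label set of $M \compose N$ is contained in the union of the label sets of $M$ and of $N$, and by induction the label set of $G_1 \compose \dots \compose G_m$ is contained in $\bigcup_{i} \mathrm{labels}(G_i) \subseteq \mathrm{labels}(T)$.

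Now suppose, for contradiction, that $\underline{M}$ were generated by a finite set $T$. Then $\mathrm{labels}(T)$ is finite, so, provided $\Lambda$ is infinite, I can pick a label $\lambda \in \Lambda \setminus \mathrm{labels}(T)$. Consider the one-node module $P_\lambda$ consisting of a single gate in its left interface carrying label $\lambda$, with empty right interface, empty interior and no edges. Then $P_\lambda \in \underline{M}$, yet by the lemma every module of $S$ uses only labels from $\mathrm{labels}(T)$, whence $P_\lambda \notin S$. This contradicts $S = \underline{M}$, so no finite $T$ generates $\underline{M}$.

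The step I expect to be the main obstacle is the label-preservation lemma — not its idea, but its careful discharge through the shared-gate machinery, where a match may be promoted back into an interface ($^\ast(M \compose N)$ or $(M \compose N)^\ast$) and one must verify it still carries a label already present in $M$ or in $N$. A second and more serious point is the reliance on $\Lambda$ being infinite: if $\Lambda$ is finite the label invariant collapses, and one must instead isolate a graph-theoretic complexity measure of the interior — a quantity that a single bounded-size generator can raise only by a bounded amount per composition step, yet that is unbounded over $\underline{M}$ — and show it stays bounded along $G_1 \compose \dots \compose G_m$. Establishing such a bounded invariant, against the fact that the accumulated interface may grow without bound as more generators are composed, is the genuinely delicate part.
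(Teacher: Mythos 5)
There is a genuine gap, and you have in fact pointed at it yourself: your main argument proves the theorem only when $\Lambda$ is infinite, but the theorem is stated for an arbitrary label set $\Lambda$ fixed once at the outset, and the paper's running examples use $\Lambda = \{\alpha,\beta\}$. For finite $\Lambda$ your label invariant says nothing (every label of every module may already occur in $T$), and the ``graph-theoretic complexity measure'' you defer to is not an optional refinement but the entire content of the proof in that case. The paper's own argument is exactly the counting you postponed: if every generator in $T$ has at most $m$ vertices and $n$ edges, then any $k$-fold composition $K_1 \compose \dots \compose K_k$ has at most $km$ vertices and at most $kn$ edges (composition only merges matched gates and inherits edges, so both quantities are subadditive), whereas $\underline{M}$ contains modules whose interiors are dense graphs --- many more edges than vertices --- which violate the linear coupling between vertex count and edge count that any finite generating set imposes. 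Since you explicitly flag this step as not carried out, the proposal does not establish the theorem as stated.

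That said, the part you did write is correct and worth noting: the normal-form reduction via associativity is sound, the lemma that $\mathrm{labels}(M \compose N) \subseteq \mathrm{labels}(M) \cup \mathrm{labels}(N)$ does survive the shared-gate machinery (a match promoted back into ${}^\ast(M \compose N)$ or $(M \compose N)^\ast$ carries the common label $\lambda$ of its two constituent gates, which is already a label of both $M$ and $N$), and the fresh-label module $P_\lambda$ gives a clean witness. So for infinite $\Lambda$ you have a legitimately different and arguably simpler proof than the paper's. To make the argument complete for all $\Lambda$, replace the label invariant by the vertex/edge count invariant in the finite case --- or simply use the counting argument throughout, since it is insensitive to $\Lambda$ altogether.
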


\begin{proof}
To prove this observation, by contradiction assume a finite set $T$ of modules that generates $\underline{M}$. Assume each module in $T$ has at most $m$ vertices and $n$ edges. Then, according to the definition of the composition operator, for each $k \in \mathbb{N}$, the $k$-fold composition $K_1 \dots K_k$ of Modules $K_i$ in $T$ has at most $k_m$ vertices and $k_n$ edges. But there are many graphs with $k_n$ vertices and more than $k_n$ edges.
\end{proof}

This Theorem implies the intuitive insight that the number of edges in a “big” system, composed of “many” modules, is sparse.

\subsection{Atomic modules}\label{sec:4:2}
In many contexts, only the architecture of a system matters: The interior of each module is hidden.

\begin{definition}
A module $M$ is \emph{atomic}, $M$ is an \emph{atom}, if and only if $M$ has exactly one interior vertex, $p$, linked to each gate $g$ of ${^\ast M}$ or $M ^\ast$ by an undirected edge. The vertex $p$ is the \emph{name of $M$}.
\end{definition}

A module $M$ can be abstracted to its corresponding \emph{atom}, $\langle M \rangle$:

\begin{definition}
For a given module $M$, let $\langle M \rangle$ denote the atom where $^\ast \langle M \rangle = {^\ast M}$ and $\langle M \rangle ^\ast = M ^\ast$. The \emph{atomic module $\langle M \rangle$} is the atom of $M$. 
\end{definition}

Composition of atomic modules is never atomic. Composition $\langle M_1 \rangle \compose \langle M_2 \rangle \compose$ $\dots$ $\compose \langle M_n \rangle$ of atoms $\langle M_i \rangle$ of modules $M_i$ can store the history of composition $M_1 \compose M_2 \compose \dots \compose M_n$.

Abstraction and composition can be applied in either order:

\begin{lemma}
Let $M$ and $N$ be modules.
\begin{enumerate}

\item $\langle\langle M \rangle\rangle = \langle M \rangle$.

\item $\langle\langle M \rangle \compose \langle N \rangle\rangle = \langle M \compose N\rangle$.

\item $^\ast \langle M \compose N\rangle = {^\ast (\langle M \rangle \compose N)} = {^\ast (M \compose \langle N \rangle)} = {^\ast (\langle M \rangle \compose \langle N \rangle)}$.

\item $\langle M \compose N\rangle ^\ast = (\langle M\rangle \compose N) ^\ast = (M \compose \langle N \rangle) ^\ast = (\langle M\rangle \compose \langle N\rangle) ^\ast$. 
\end{enumerate}
\end{lemma}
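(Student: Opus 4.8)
The plan is to reduce all four items to a single foundational observation: the atom construction $\langle\cdot\rangle$ depends only on a module's two interfaces, and the left and right interfaces of any composition are determined solely by the interfaces of the composed modules, never by their interiors. Once this is made precise, every claim follows with essentially no computation. As a preliminary remark I would record that, by the definition of $\langle M\rangle$, we have ${}^\ast\langle M\rangle = {}^\ast M$ and $\langle M\rangle^\ast = M^\ast$; thus $\langle M\rangle$ and $M$ have \emph{literally identical} interfaces, not merely equivalent ones. Moreover an atom is uniquely fixed by its two interfaces — its single interior vertex is linked to every gate, and its name is taken canonically — so that $\langle X\rangle = \langle Y\rangle$ whenever ${}^\ast X = {}^\ast Y$ and $X^\ast = Y^\ast$. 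I would state this explicitly, since it is exactly what licenses the use of equality ``$=$'' rather than mere interface equivalence throughout the lemma.

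For item~1, I observe that $\langle M\rangle$ is already an atom whose interfaces are ${}^\ast M$ and $M^\ast$. Applying $\langle\cdot\rangle$ again yields the atom with interfaces ${}^\ast\langle M\rangle = {}^\ast M$ and $\langle M\rangle^\ast = M^\ast$, i.e.\ the atom with the very same two interfaces; by the uniqueness remark this is $\langle M\rangle$ itself, so $\langle\langle M\rangle\rangle = \langle M\rangle$.

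For items~3 and~4 the key point is that the defining clauses for ${}^\ast(X\compose Y)$ and $(X\compose Y)^\ast$ in the two composition definitions consult only the interface data ${}^\ast X$, $X^\ast$, ${}^\ast Y$, $Y^\ast$ — through matches, matchfree, and the shared-gate replacement rules — and never touch any interior. Concretely, the left interface is assembled from ${}^\ast M$, $M^\ast$, and ${}^\ast N$ only, while the right interface uses only $M^\ast$, ${}^\ast N$, and $N^\ast$. Since replacing $M$ by $\langle M\rangle$ leaves ${}^\ast M$ and $M^\ast$ fixed, and replacing $N$ by $\langle N\rangle$ leaves ${}^\ast N$ and $N^\ast$ fixed, each of the three substitutions in item~3 leaves ${}^\ast(M\compose N)$ unchanged, and each substitution in item~4 leaves $(M\compose N)^\ast$ unchanged; together with the atom identities ${}^\ast\langle M\compose N\rangle = {}^\ast(M\compose N)$ and $\langle M\compose N\rangle^\ast = (M\compose N)^\ast$ this gives the full chains of equalities. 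Item~2 then combines the rest: both $\langle\langle M\rangle\compose\langle N\rangle\rangle$ and $\langle M\compose N\rangle$ are atoms, and by the rightmost equalities of items~3 and~4 the modules $\langle M\rangle\compose\langle N\rangle$ and $M\compose N$ have the same left and right interfaces, so their atoms share identical defining interfaces and are equal by the uniqueness remark.

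The main obstacle here is not computational but definitional: I must justify the \emph{strict} equalities. It is essential to make explicit that $\langle\cdot\rangle$ returns a canonical atom determined by interfaces alone — otherwise items~1 and~2 would hold only up to interface equivalence. A secondary point requiring care is the shared-gate case: I would check that the complementary clauses for shared gates likewise read off only interface data (the set ${}^\ast M \cap M^\ast$, the matches of $M^\ast$ and ${}^\ast N$, and symmetrically for $N$), and that $\langle M\rangle$ inherits precisely the shared gates of $M$, so that modules with shared gates are covered by the same argument without modification.
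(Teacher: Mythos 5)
The paper states this lemma without giving any proof, so there is nothing to compare against; judged on its own, your argument is correct and is the natural one. Your single reduction --- the interfaces of $X \compose Y$ are computed from ${}^\ast X, X^\ast, {}^\ast Y, Y^\ast$ alone (including the matches, matchfree sets, and shared-gate replacement clauses), while $\langle\cdot\rangle$ preserves both interfaces literally --- immediately yields items~3 and~4, and items~1 and~2 then follow from the fact that an atom is determined by its two interfaces. You are also right to flag that the paper's definition of $\langle M\rangle$ leaves the identity of the single interior vertex (the ``name'') unspecified, so the strict equalities in items~1 and~2 require reading $\langle\cdot\rangle$ as returning a canonical atom; without that convention those two items hold only up to an isomorphism fixing the interfaces. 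That observation is a genuine (if minor) definitional gap in the paper, not in your proof.
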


\subsection{Abstract modules}\label{sec:4:3}
In many contexts, the interior of modules is irrelevant and is abstracted away entirely. Only the interfaces remain. 

\begin{definition}
A module $M$ is \emph{abstract}, if and only if it has no interior vertices and no edges. 
\end{definition}

Obviously, an abstract module is fully specified by its both interfaces. A module $M$ yields its corresponding \emph{abstraction}, $[M]$:

\begin{definition}
For a given module $M$, let $[M]$ denote the \emph{empty module} with $^\ast [M] = {^\ast M}$ and $[M] ^\ast = M ^\ast$. $[M]$ is the \emph{abstraction of $M$}.
\end{definition}

Composition retains abstraction; i.e. modules can be abstracted and composed in either order:

\begin{lemma}
Let $M$ and $N$ be modules.
\begin{enumerate}
\item $[[M]] = [M]$.
\item $^\ast [M \compose N] = {^\ast ([M] \compose N)} = {^\ast (M \compose [N])} = {^\ast ([M] \compose [N])}$.
\item $[M \compose N] ^\ast = ([M] \compose N) ^\ast = (M \compose [N]) ^\ast = ([M] \compose [N]) ^\ast $.
\end{enumerate}
\end{lemma}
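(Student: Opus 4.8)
The plan is to build everything on the single structural fact that makes this lemma (and the preceding one on atomic modules) true: the interfaces of a composed module are determined solely by the interfaces of its two constituents, with no reference to their interiors. Concretely, by the definition of module composition, $^\ast(M \compose N) = {^\ast M} \cup \text{matchfree}(^\ast N, M^\ast)$ and $(M \compose N)^\ast = N^\ast \cup \text{matchfree}(M^\ast, {^\ast N})$, where in each case the left summand is ordered first. Both right-hand sides mention only the four interfaces $^\ast M$, $M^\ast$, $^\ast N$, $N^\ast$, since a match of two interfaces depends only on labels and orders, and hence $\text{matchfree}(A,B)$ is a function of $A$ and $B$ alone. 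On the other hand, the abstraction operator leaves both interfaces untouched: by definition $^\ast[M] = {^\ast M}$ and $[M]^\ast = M^\ast$, and likewise for $N$. Thus replacing $M$ by $[M]$ or $N$ by $[N]$ inside a composition cannot alter either interface of the result, which is precisely what items 2 and 3 assert.

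For item 1 I would argue directly from the definition. The module $[M]$ is the empty module (no interior vertices, no edges) whose interfaces are $^\ast M$ and $M^\ast$. Its abstraction $[[M]]$ is again the empty module, now with interfaces $^\ast[M] = {^\ast M}$ and $[M]^\ast = M^\ast$. Hence $[M]$ and $[[M]]$ are empty modules with identical left and identical right interfaces; since an abstract module is fully specified by its two interfaces, the two modules are equal.

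For item 2 I would evaluate each of the four expressions using the formula above. Writing everything out, $^\ast[M \compose N] = {^\ast(M \compose N)} = {^\ast M} \cup \text{matchfree}(^\ast N, M^\ast)$, while $^\ast([M] \compose N) = {^\ast[M]} \cup \text{matchfree}(^\ast N, [M]^\ast) = {^\ast M} \cup \text{matchfree}(^\ast N, M^\ast)$, and the remaining two expressions reduce identically, using $^\ast[N] = {^\ast N}$ in the third and fourth. All four collapse to the same interface, so they coincide. Item 3 is entirely symmetric, evaluating $(\cdot)^\ast$ via $(M \compose N)^\ast = N^\ast \cup \text{matchfree}(M^\ast, {^\ast N})$ and using $[N]^\ast = N^\ast$.

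The one point that needs care, and the step I expect to be the main obstacle, is the contribution of shared gates. Composition of modules with shared gates amends the interfaces by replacing a shared gate $x$ that participates in a match $\{x,y\}$ of $M^\ast$ and $^\ast N$ with that match (symmetrically on the right for a shared $y$ of $N$). I would check that this amendment, too, depends only on the interfaces: membership of $x$ in $^\ast M \cap M^\ast$ is a statement about the interfaces alone, and abstraction preserves it, because $^\ast[M] = {^\ast M}$ and $[M]^\ast = M^\ast$ coincide as ordered sets, so $x$ is shared in $M$ if and only if it is shared in $[M]$. Consequently the shared-gate replacements performed on the left (resp. right) interface are the same across all four expressions, and the equalities of items 2 and 3 survive the presence of shared gates.
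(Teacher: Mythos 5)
Your proof is correct, and it supplies exactly the argument the paper leaves implicit (the lemma is stated there without proof): since $^\ast[K]={^\ast K}$ and $[K]^\ast=K^\ast$ by definition, and since both interfaces of $M\compose N$ --- including the shared-gate replacements --- are computed solely from the four interfaces $^\ast M, M^\ast, {^\ast N}, N^\ast$, all four expressions in items~2 and~3 evaluate to the same ordered labeled set. Your separate treatment of item~1 via the fact that an abstract module is fully determined by its two interfaces is likewise the intended reasoning, so there is nothing to add.
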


Composition of abstractions is again abstract only in special cases:

\begin{lemma}
Let $M$ and $N$ be interface modules. Then $M \compose N$ is an interface module, if and only if $M ^\ast$ and $^\ast N$ have no match.
\end{lemma}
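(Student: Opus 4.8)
The plan is to use the definition directly: a module is abstract exactly when it has neither interior vertices nor edges, so I would check these two conditions separately for $M \compose N$. First I would dispose of the edges, which is immediate, and thereby reduce the entire lemma to a single structural question: when does the composite acquire an interior vertex?

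By the definition of composition along interfaces, every edge of $M \compose N$ arises from an edge $(x,y)$ of the underlying graph of $M$ or of $N$. Since $M$ and $N$ are abstract, neither underlying graph has any edge, so $M \compose N$ has none. Consequently $M \compose N$ is abstract if and only if its interior is empty, and it remains only to determine that interior.

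Here I would compute the interior explicitly. Let $G_M$ and $G_N$ be the underlying graphs and write $A = M^\ast$ and $B = {}^\ast N$. The node set of $M \compose N$ is $(G_M\setminus A)\cup(G_N\setminus B)\cup{}$matches$(A,B)\cup{}$matchfree$(A,B)\cup{}$matchfree$(B,A)$. Abstractness gives $G_M = {}^\ast M\cup M^\ast$ and $G_N={}^\ast N\cup N^\ast$, so, in the absence of shared gates, $G_M\setminus A = {}^\ast M$ and $G_N\setminus B = N^\ast$. The composition rule sets $^\ast(M\compose N)={}^\ast M\cup{}$matchfree$(B,A)$ and $(M\compose N)^\ast = N^\ast\cup{}$matchfree$(A,B)$. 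Subtracting these two interfaces from the node set, and using that the five listed pieces are pairwise disjoint, leaves precisely matches$(A,B)$. Hence the interior of $M\compose N$ equals matches$(M^\ast,{}^\ast N)$, which is empty exactly when $M^\ast$ and $^\ast N$ have no match; together with the previous paragraph this is the claimed equivalence.

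The step I expect to be the main obstacle is the bookkeeping forced by shared gates. When a matched gate of $M$ lies in $^\ast M\cap M^\ast$ (and symmetrically for $N$), the composition rule for shared gates moves the corresponding match out of the interior and into an interface of $M\compose N$, so the clean identity ``interior $=$ matches'' can break, and with it the direction that abstractness rules out a match. I would address this either by reading the lemma for modules without shared gates, or by checking directly that such a relocated match is still a genuinely new merged node, present in $M\compose N$ but in neither $[M]$ nor $[N]$, so that the composite cannot be abstract in the intended sense. Verifying the pairwise disjointness invoked above is routine but should be recorded, since it is what makes the set-theoretic subtraction exact.
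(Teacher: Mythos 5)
The paper states this lemma without proof, so there is nothing to compare against line by line; judged on its own, your argument is correct and supplies exactly the computation the paper leaves implicit. Reducing the question to the interior (since abstractness of $M$ and $N$ immediately kills all edges of $M \compose N$) and then computing interior$(M \compose N) = {}$matches$(M^\ast, {}^\ast N)$ from the five-part node set minus the two composed interfaces is the natural route, and your disjointness bookkeeping is the right thing to record: each gate of $M^\ast$ lies in at most one match with ${}^\ast N$, so $M^\ast$ splits disjointly into matched and match-free gates, which is what makes the subtraction exact. Your caveat about shared gates is not mere pedantry but a genuine observation about the lemma itself: if a matched gate $x$ lies in ${}^\ast M \cap M^\ast$, the shared-gate rule relocates the match $\{x,y\}$ into ${}^\ast(M \compose N)$, so the composite can have empty interior and no edges --- hence be abstract under the paper's own definition --- even though $M^\ast$ and ${}^\ast N$ do have a match; the ``only if'' direction then fails. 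Your first remedy (reading the lemma for modules without shared gates) is the sound one; your second (declaring the relocated match to make the composite non-abstract ``in the intended sense'') would require changing the definition of abstract module and should not be presented as a proof step. One cosmetic point: the lemma says ``interface module'' where Section~\ref{sec:4:3} defines ``abstract module''; you silently identify the two, which is surely intended but worth stating.
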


\subsection{Word modules}\label{sec:4:4}
As a very special case, for the alphabet $\Sigma = \{a, b, \dots, z\}$, let $M_a, M_b, \dots, M_z$ be sequentially controlled and abstract modules as in Figure~\ref{fig:11}. They generate the free monoid of words of formal languages, as in informatics.

The set of word modules over an alphabet $\Sigma$ is finitely generated.

\begin{figure}[htb]
\centering
\includegraphics[width=1\textwidth]{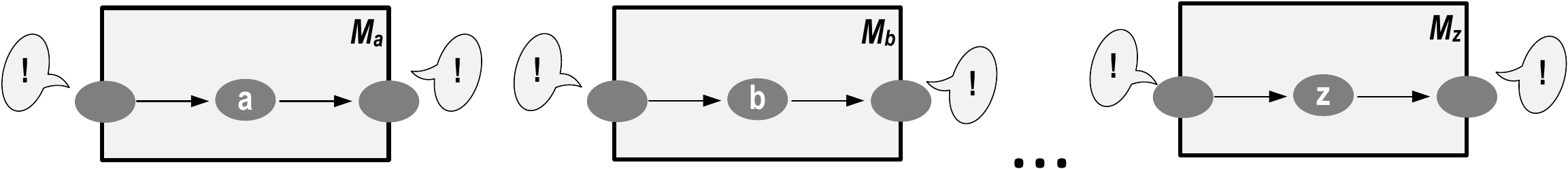}
\caption{Each symbol gets its module}
\label{fig:11}
\end{figure}

\section{Related work}\label{sec:5}

The quest for modules and their composition is of paramount importance for any kind of system models. Starting with the famous work of Parnas \cite{Parnas_72}, literature on modules and composition of modules is voluminous. In most contributions, models and programs are equipped with interfaces, and composition is defined along those interfaces. Typical examples include:

\begin{itemize}
\item \emph{Applied modeling frameworks}: The most influential modeling framework, UML \cite{Jacobson_99a}, suggests, on a semi-formal level, a particular kind of composition: a set of mutually unrelated modules are composed and the result constitutes a larger module. Altogether, this yields a tree-shaped architecture, with vertically unrelated branches. The composition calculus is much more flexible in this respect.

\item \emph{Web services}: There are plenty of approaches to \emph{web service} composition; composition can be agent-based, synchronous, asynchronous, or based on interaction protocols \cite{Petrie_16}. The composition calculus covers those composition variants by means of \emph{adapters}, as discussed in \cite{Fettke_Reisig_24}.

\item \emph{Interface languages}: An \emph{interface language} represents interface descriptions in a language-independent way, in order to link programs in different languages \cite{Lamb_87}. It would be interesting to re-formulate interfaces of those languages as domain specific instantiations of the composition calculus. 
\end{itemize}

More abstract calculi which support modules and their composition include \emph{Hoare’s communicating sequential processes} (CSP) \cite{Hoare_85}, and the many versions of process calculi [6]. These calculi come with associative composition, but at the price of non-determinism. In Section~\ref{sec:2:1} it was discussed already that process calculi are tightly related to the composition calculus, as in both cases, composition is technically defined by the merging of complementarily labeled interface elements. This also applies to variants and abstractions of process calculi such as Montanari's tiles \cite{Bruni_04} and many more approaches.

Composition of Petri nets has a long tradition, with many contributions. We just mention some relevant ideas of the last three decades: \cite{Christensen_Petrucci_93} composes many Petri net modules in one go, merging all equally labeled places as well as transitions. Associativity of composition is not addressed; but it seems obvious that this composition operator is not associative. A long-standing initiative with many variants is algebraic calculi for Petri nets, such as the \emph{box calculus} and \emph{Petri net algebras} in various forms \cite{Best_et_al_01}. In the spirit of process algebras, these calculi define classes of nets inductively along various composition operators, merging equally labeled transitions. Associativity of composition is mostly assumed, but rarely explicitly discussed. \cite{Baldan_08} defines composition $A \compose B$ of “composable” nets $A$ and $B$ in a categorical framework, with $A$ and $B$ sharing a common net fragment, $C$. The effect of composition on distributed runs is studied. The issue of associativity of composition is not addressed. In the composition calculus, one would formulate the net fragment $C$ as an adapter, constructing $A' \compose C \compose B'$. Here, $A'$ and $B'$ are $A$ and $B$ without $C$. \cite{Kindler_Petrucci_09} suggests a general framework for “modular” Petri nets, i.e. nets with features to compose a net with its environment. A module may occur in many instances. Associativity of composition is not discussed in this paper, but implicitly assumed. 

Petri nets with two-faced interfaces are discussed in \cite{Rathke_et_al_14,Sobocinski_10}. They suggest \emph{Petri nets with boundaries} (PNB), with two composition operators. A PNB resembles a module in that it has a left and a right interface. One of the composition operators recalls the case of transition interfaces with all transitions labeled alike, the other one equals the case of disjoint labels. Interestingly, both operators are associative, but in general not commutative. It might be worthwhile to formulate the paper’s results on compositional reachability in the framework of the composition calculus, formulating specific assumptions on the interior of modules.

The first ideas for modules and the composition operator occurred for runs of Petri nets in \cite{Reisig_05}. Associativity of the composition operator was proven in \cite{Reisig_19}. Double-faced interfaces for modeling techniques other than Petri nets can be found in the literature, albeit in versions that are more specialized than the composition calculus. A typical example is the “piping” or “chaining” operator $P >> Q$ for the language CSP \cite{Roscoe_10}. Associativity of $>>$ is assumed without discussion.

There are numerous versions of equipping classical automata with mechanisms for communication; e.g. \cite{de_Alfaro_Henzinger_01,Fendrich_Luttgen_19}. Usually, composition includes semantic requirements and properties, yielding lots of variants of compositions $A \compose B$. The composition calculus suggests to locate semantic requirements in an adapter, $C$, in a composition $A \compose C \compose B$. Composition then can entirely be defined in terms of the interfaces of $A$, $B$, and $C$.

\section{Conclusion}\label{sec:6}
The aim of this contribution is to fortify the claim that the composition calculus is not just another composition operator, but that this operator is a fundamental basis for any composition principle. The composition calculus turns out as a further example of a “good” algebraic theory: A minimum of assumptions yield a multitude of non-trivial properties, i.e. graphs, and labeled, ordered subsets of vertices. Conceptually, the composition calculus defines composition purely by means of interfaces; the involved modules’ interior remains more or less untouched. This allows to consider composition independent of specific instantiations. The insights given here then of course hold for all instantiations.

\subsubsection*{Acknowledgment}
We deeply appreciate the careful and thoughtful hints and comments of the referees.

%
%
%

\bibliographystyle{splncs04}
\bibliography{main}





\end{document}